\newtheorem{observation}{Observation}
\newcommand{\keywords}[1]{\par\addvspace\baselineskip
\noindent\keywordname\enspace\ignorespaces#1}
\begin{document}

\mainmatter  

\title{Fast and Compact Distributed Verification and
              Self-Stabilization of a DFS Tree}

\titlerunning{Distributed Verification and Self-Stabilization of DFS}

%
%
\author{Shay Kutten
\and Chhaya Trehan}

\authorrunning{Kutten and Trehan}

\institute{Faculty of Industrial Engineering and Management, Technion, Haifa, Israel\\
\mailk, \mailt\\
}

%
%

\toctitle{Fast and Compact Distributed Verification and Self-Stabilization of a DFS Tree}
\tocauthor{Shay Kutten and Chhaya Trehan}
\maketitle

\setcounter{secnumdepth}{3}

\begin{abstract}
We present algorithms for distributed verification and silent-stabilization of a DFS(Depth First Search) spanning tree of a connected network. Computing and maintaining such a DFS tree is an important task, e.g., for constructing efficient routing schemes. Our algorithm improves upon previous work in various ways. Comparable previous work has space and time complexities of  $O(n\log \Delta)$ bits per node and $O(nD)$ respectively, where $\Delta$ is the highest degree of a node, $n$ is the number of nodes and $D$ is the diameter of the network. In contrast, our algorithm has a space complexity of $O(\log n)$ bits per node, which is optimal for silent-stabilizing spanning trees and runs in $O(n)$ time. In addition, our solution is modular since it utilizes the distributed verification algorithm as an independent subtask of the overall solution. It is possible to use the verification algorithm as a stand alone task or as a subtask in another algorithm. To demonstrate the simplicity of constructing efficient DFS 
algorithms using the modular approach, we also present a (non-silent) self-stabilizing DFS token circulation algorithm for general networks based on our silent-stabilizing DFS tree. The complexities of this token circulation algorithm are comparable to the known ones.
\keywords{Fault Tolerance, Self-* Solutions, Silent-Stabilization, DFS, Spanning Trees}
\end{abstract}

\section{Introduction}
A clear separation is common between the notions of computing and verification in sequential systems. A similar separation in the context of distributed systems has been emerging. Distributed verification of global properties like minimum spanning trees have been devised~\cite{MSTVerification}. 

An area of distributed systems that can greatly benefit from this separation is that of self-stabilization. Self-stabilization is the ability of a system to recover from transient faults. A self-stabilizing distributed system can be started in any arbitrary configuration and must eventually converge to a desired legal behavior.  Self-stabilizing algorithms can run a distributed verification algorithm repeatedly to detect the occurrence of faults in the system and take the necessary action for convergence to a legal behavior. This is the approach we take here in devising a silent-stabilizing DFS algorithm. The concept of   first detecting a fault and then taking the corrective measures for self-stabilization was first introduced by~\cite{KatzPerry},~\cite{Afek_thelocal} and~\cite{AwerbuchTransformer}. The approach taken by Katz and Perry in~\cite{KatzPerry} is that of global detection of faults by a leader node that periodically takes the \emph{snapshots} of the global state of the network and resets the 
system if a fault is detected. Afek, Kutten and Yung~\cite{Afek_thelocal}, and Awebuch et al.~\cite{AwerbuchTransformer} on the other hand, suggested that the faults in the global state of a system could sometimes be detected by local means - i.e., by having each node check the states of all its neighbors. G{\"{o}}{\"{o}}s and Suomela  further formalized the idea of local detection of faults in~\cite{SuomelaGoos}. Korman, Kutten and Peleg~\cite{ProofLabeling} introduced the concept of \emph{proof labeling schemes}. A \emph{proof labeling scheme} works by assigning a \emph{label} to every node in the input network. The collection of labels assigned to the nodes acts as a locally checkable \emph{distributed proof} that the global state of the network satisfies a specific global predicate. A \emph{proof labeling scheme} consists of a pair of algorithms $(\mathcal{M}$, $\mathcal{V})$, where $\mathcal{M}$ is a \emph{marker} algorithm that generates a label for every node and $\mathcal{V}$ is a \emph{verifier} 
algorithm that checks the \emph{labels} of neighboring nodes. In this paper, we present a \emph{proof labeling scheme} for detecting faults in the distributed representation of a DFS spanning tree. For self-stabilization, the DFS tree is computed afresh and new labels are assigned to the nodes by the marker on detection of faults.

\subsection{Additional Related Work}
Dijkstra introduced the concept of Self-stabilization~\cite{Dijkstra:1974} in distributed systems. Self-stabilization deals with the \emph{faults} that entail an arbitrary corruption of the state of a system. These faults are rather severe in nature but do not occur very frequently in     reality~\cite{Varghese00thefault}. \\
Table~\ref{table: related work} summarizes the known complexity results for self-stabilizing DFS algorithms. Collin and Dolev presented a silent-stabilizing DFS tree algorithm in~\cite{SelfStabDFS}. Their algorithm works by having each node store its path to the root node in the DFS tree. Since the path of a node to the root in a DFS tree can be as long as $n$, the number of nodes in the network, the space complexity of their algorithm is $O(n\log \Delta)$ per node, where $\Delta$ is the highest degree of a node in the network. The time complexity of their algorithm under the \emph{contention} time model is $(nD\Delta)$. We drop the multiplicative factor of $\Delta$ from their time complexity here for the sake of comparison with all the other algorithms that do not count their time under the \emph{contention} model. Cournier et al. presented a snap-stabilizing DFS \emph{wave} protocol in~\cite{SnapStabCournier} which snap stabilizes with a space complexity of $O(n\log n)$.

\begin{table}[h!]

\begin{center}

    \begin{tabular}{|l|l|l|p{5cm}|}\label{table: related work} 
    
    Algorithm & Space & Stabilization Time & Remarks \\ \hline
    ~\cite{SelfStabDFS} & $O(n \log \Delta)$ & $O(nD)$ & Silent \\ \hline
    ~\cite{SnapStabCournier} & $O(n\log n)$ & 0 & Snap Stabilizing  \\ & & & \emph{first} DFS wave, needs Unique IDs \\ \hline
    ~\cite{Cournier05:snap-stabilizing} & $O(\log n)$ & 0 & Snap Stabilizing  \\ & & & Wave takes $O(n^2)$ rounds \\ \hline
    ~\cite{HuangChen} & $O(\log n)$ & $O(nD)$ & Token Circulation, not silent\\ \hline
    ~\cite{Petit:DFTC} & $O(\log n)$ & $O(n)$ & Token Circulation, not silent  \\ \hline  
    ~\cite{Datta:DFTC} & $O(\log \Delta)$ & $O(nD)$ & Token Circulation, not silent \\ \hline
    ~\cite{Johnen95:DFTC} & $O(\log \Delta)$ & $O(nD)$ & Token Circulation, not silent \\ & & & Requires neighbor of neighbor info \\ \hline
    ~\cite{johnen97:DFTC} &  $O(\Delta)$ & $O(nD)$ & Token Circulation, not silent\\ \hline
    ~\cite{PetitV97:DFTC} & $O(\log \Delta)$ & $O(nD)$ & Token Circulation, not silent \\ \hline
    \textbf{OUR RESULTS} & $O(\log n)$ & $O(n)$ &\textbf{Two algorithms: Silent and}\\
     & & &\textbf{token circulation;}\\& & & \textbf{both with the same complexity} \\ \hline
 \end{tabular}   
 \end{center}
 \caption{Comparing self-stabilizing DFS algorithms}
 \end{table}
Considerable work has been invested in developing self-stabilizing depth-first token circulation algorithms with multiple successive papers improving each other. All of these algorithms also generate a DFS tree in every token circulation round, however these algorithms are not silent. Self-stabilizing depth-first token circulation on arbitrary rooted networks was first considered by Huang and Chen in~\cite{HuangChen}. Their algorithm stabilizes in $O(nD)$ time with a space complexity of $O(\log n)$ bits per node. Subsequently several self-stabilizing DFS token circulation algorithms~\cite{Datta:DFTC,Johnen95:DFTC,johnen97:DFTC,PetitV97:DFTC} were devised. All these papers worked on improving the space complexity of~\cite{HuangChen} from $O(\log n)$ to a function of $\Delta$, the highest degree of a node in the network. The time complexity of all of the above token circulation algorithms~\cite{HuangChen,Johnen95:DFTC,johnen97:DFTC,PetitV97:DFTC} is $O(nD)$ rounds, which is much more than the time it takes for 
one token circulation cycle on a given network. Petit improved the stabilization time complexity of depth-first token circulation to $O(n)$ in~\cite{Petit:DFTC} with a space complexity of $O(\log n)$ bits per node. Petit and Villain~\cite{PetitV00:DFTCM} presented the first self-stabilizing depth-first token circulation algorithm that works in asynchronous message passing systems. 
 
\subsection{Our Contribution}
The main contribution of the current paper is a silent self-stabilizing DFS spanning tree algorithm. The space complexity of our algorithm is $O(\log n)$ bits per node. The only other \emph{silent-stabilizing} DFS tree algorithm~\cite{SelfStabDFS} has a space complexity of $O(n\log \Delta)$. Dolev et al.~\cite{Dolev:lowerbound} established a lower bound of $O(\log n)$ bits per node on the memory requirement of silent-stabilizing spanning tree algorithms. Thus, ours is the first memory optimal silent-stabilizing DFS spanning tree algorithm. 
The silent-stabilizing DFS construction algorithm is designed in a modular way consisting of separate modules for fault detection and correction. The distributed verification module of this algorithm can be considered a contribution in itself.

Composing self-stabilizing primitives using fair combination of protocols is a well-known technique(see e.g.~\cite{DolevIsraeliMoran,Stomp}) to ensure that the resulting protocol is self-stabilizing. We use this approach of protocol combination to design a self-stabilizing depth-first token circulation algorithm which uses our silent-stabilizing DFS tree as a module of the overall algorithm. The space and time complexities of our token circulation algorithm are as good as the previously published work on \emph{fast} self-stabilizing depth-first token circulation~\cite{Petit:DFTC}. 

\subsection{Outline of the paper}
In the next section (Section~\ref{Sec: Definitions}), we describe the model of distributed systems considered in this paper. That section also includes some basic definitions and notations. Section~\ref{Sec:Verification} addresses the distributed verification algorithm which acts as the \emph{Verifier} $\mathcal{V}$ of the proof labeling scheme. The \emph{Marker} $\mathcal{M}$ of the \emph{proof labeling scheme} is presented in Section~\ref{Sec:marker}. Section~\ref{Sec:Stabilization} describes the technique used to make the algorithm self-stabilizing. Section~\ref{Sec:Analysis} presents the correctness proofs and performance analysis. Section~\ref{Sec:circulation} describes a token circulation scheme based on the new silent-stabilizing DFS spanning tree.

\section{Preliminaries}\label{Sec: Definitions}
A distributed system is represented by a connected undirected graph $G(V,E)$ without self-loops and parallel edges, where each node $v\in V$ represents a processor in the network and each edge $e \in E$ corresponds to a communication link between its incident nodes. Processors communicate by writing into their own shared registers and reading from the shared registers of the neighboring processors. The network is assumed to be \emph{asynchronous}. We do not require processors to have unique identifiers. We do assume the existence of a distinguished processor, called the root of the network. Each node $v\in V$ orders its edges by some arbitrary ordering $\alpha_v$ as in~\cite{SelfStabDFS}. For an edge $(u, v)$, let $\alpha_u(v)$ denote the index of the edge $(u,v)$ in $\alpha_u$.

As opposed to Collin and Dolev~\cite{SelfStabDFS}, We use the (rather common) ideal time complexity which assumes that a node reads all of its neighbors in
at most one time unit. Our results translate easily to an alternative, stricter, \emph{contention} time complexity used by Collin and Dolev in~\cite{SelfStabDFS}, where a node can access only one neighbor in one time unit. The time cost of such a translation is a multiplicative factor of $O(\Delta)$, the maximum degree of a node (it is not assumed that $\Delta$ is known to nodes). As is commonly assumed in the case of self-stabilization, each node has only some bounded number of memory bits available to be used. Here, this amount of memory is $O(\log n)$.

\noindent\textbf{Self-stabilization and silent-stabilization:}
A distributed algorithm is self-stabilizing if it can be started in any arbitrary global state and once started, the algorithm converges to a legal state by itself and stays in the legal state unless additional faults occur~\cite{Dolev}. A self-stabilizing algorithm is \emph{silent} if starting from an arbitrary state it converges to a legal global state after which the values stored in the communication registers do not change, see e.g.~\cite{Dolev:lowerbound}. While some problems like token circulation are non-silent by nature, many \emph{input/output} algorithms allow a silent solution.

\noindent\textbf{Spanning Tree: Distributed Representation:}
A spanning tree $T$ of a connected, undirected graph $G(V,E)$ is a tree composed of all the nodes and some of the edges of $G$. A spanning tree $T$ of some graph $G$ is represented in a distributed manner by having each node locally mark some of its incident edges such that the collection of marked edges of all the nodes forms a spanning tree of $G$. Actually, it is enough that each node marks its edge leading to its parent on the tree in a local variable.

\noindent\textbf{DFS Tree and the \emph{first} DFS Tree of a Graph:}
A DFS Tree of a connected, undirected graph $G(V,E)$ is the spanning tree generated by a depth first search traversal of $G$. In a DFS traversal, starting from a specified node called the root, all the nodes of the graph are visited one at a time, exploring as far as possible before backtracking, see e.g.~\cite{Even}. 
The \emph{first} DFS traversal is the one that acts as follows:
whenever a node $v$ has a set of unexplored edges to choose from, the chosen edge is the edge with the smallest port number in the port ordering $\alpha_v$. The tree thus generated is called the \emph{first DFS} tree~\cite{SelfStabDFS}. While a connected, undirected graph can have more than one DFS spanning trees, it can have only one \emph{first DFS} spanning tree.

\noindent\textbf{Lexicographic Ordering}\label{Sec: LexicOrdering}
A simple path from the root of a graph $G$ to some node $v \in V$ can be represented as a string starting with a $\perp$ followed by a sequence of the port numbers of the outgoing edges on the path~\cite{SelfStabDFS}. Given such a string representation of a path, a lexicographic operator $\prec$ can be defined to compare multiple paths of a given node $v$ from the root, where $\perp$ is considered the minimum character. In the \emph{first} DFS tree of a graph, the path leading from the root to some node $v \in V$ is the lexicographically smallest (w.r.t. $\prec$)  among all the simple paths from the root to $v$~\cite{SelfStabDFS}. 

\noindent\textbf{DFS Intervals}
In a DFS traversal, it is common to assign to each node an interval $(in, out)$ corresponding to the discovery and finish time of exploration of that node. The discovery time or $in$ is the time at which a node is discovered for the first time. The discovery time of a node $v \in V$ is denoted as $in_v$. The finish time of node $v$ denoted by $out_v$ is the time at which a node has finished exploring all its neighbors. These intervals have the property that given any two intervals $(in, out)$ and $(in', out')$, either one includes the other or they are totally disjoint. Assuming without loss of generality that $in < in'$, we can write this formally as: either $(in < in' < out' < out)$ or $(in < out < in' < out')$~\cite{Even}. In other words, the DFS intervals induce a partial order on the nodes of a graph. 

\subsection{Notation}\label{Sec: Notations}
We define the following notation to be used throughout:
\begin{itemize} 
\item $\eta(v)$ denotes the set of neighbors of $v$ in $G$. $\forall v \in V$ $(\eta(v)$ $=$ $ \{u | u \in V \wedge (u,v) \in E)\})$. 
\item $interval_v$ denotes the $(in, out)$ \emph{label} of $v$.
 \item $in_v$ denotes the $in$ \emph{label} of $v$ and $out_v$ denotes the $out$ \emph{label} of $v$.
\item Relational operator $\subset$ between two intervals $(in, out)$ and $(in', out')$ indicates the inclusion of of the first interval in the second one. For example: $(in, out) \subset (in', out')$ indicates that $(in, out)$ is included in $(in', out')$.
\item Relational operator $\supset$ is defined similarly.
\end{itemize}

\section{DFS Verification: \emph{Verifier} $\mathcal{V}$}\label{Sec:Verification}
Given a graph $G(V,E)$ and the distributed representation of a spanning tree $T$ of $G$, the DFS verification algorithm is required to verify that $T$ is the \emph{first} DFS tree of $G$. 
The \emph{Verifier} $\mathcal{V}$ takes as input a connected graph $G(V,E)$ where each node $v \in V$ bears an $(in_v, out_v)$ label in addition to $v$'s parent on $T$. Note that $\mathcal{V}$ takes $(in, out)$ labels of nodes as input and is not concerned with how they are generated. \\
We assume that each node can read the labels of all its neighbors in addition to its own label and state. A node cannot look at the state of any of its neighbors, however. Each node $v \in V$ periodically reads the labels of all its neighbors and locally computes the following additional information from its own state and label as well as the labels of its neighbors. 
\subsection{Intermediate Computations} \label{sec: data}
Each node computes the following \emph{macros} to be used for verification.
\begin{enumerate}
\item There are zero or more neighbors of $v$ whose interval includes $v$'s interval. Let us call the set of all such nodes the \emph{neighboring ancestors} of $v$ and denote this set by by $anc_l(v)$.
 $$anc_l(v) = \{w | w \in \eta(v) \; and  \; interval_w \supset interval_v\}$$
 \item The parent of $v$ as perceived by the labels : $parent_l(v)$ $=$ $w | w \in anc_l(v) \wedge \forall u \in anc_l(v)$ ($u \neq w \rightarrow $ $interval_w$ $\subset$ $interval_u$).  
\item There are zero or more neighbors of $v$ whose interval is included in $v$'s interval, let us call the set of all such nodes the \emph{neighboring descendants} of $v$ and denote this set by $desc_l(v)$.
$$desc_l(v) = \{ w | w \in \eta(v)\; and \; interval_w \subset interval_v\}$$
\item A \emph{child neighbor} of $v$ is a neighboring descendant of $v$ whose interval is not included in the interval of any other neighboring descendant of $v$. 
  $$child_l(v) = u | u \in desc_l(v) \wedge \neg\exists u' \in desc_l(v) (u' \neq u \wedge interval_{u'} \supset interval_u)$$
\item $children_l(v)$ $\subseteq$ $desc_l(v)$ is the set of all \emph{child neighbors} of $v$. 
\end{enumerate}
The subscript $l$ in $anc_l(v)$ above denotes that the set $anc_l(v)$ is computed by the node $v$ by just looking at the labels of $v$ and those of $v$'s neighbors. The same holds for all the other \emph{macros} defined above. 
It is worth pointing out that all these are intermediate computations and the data they generate need not be stored on the node. 

The verification is performed by having each node compute a set of predicates. If $T$ is indeed the \emph{first} DFS tree of $G$ and the labels on all the nodes are proper (i.e. they are as if they were generated by an actual \emph{first} DFS Traversal of the input graph); then the verifier \textbf{\emph{accepts}} continuously on every node until a fault occurs. If a fault occurs either due to the corruption of the state of some nodes or due to some nodes having incorrect labels, at least one node \textbf{\emph{rejects}}. The node that rejects is called a detecting node. The verifier self-stabilizes trivially since it runs periodically. 

\subsection{Local Interval Predicates} \label{Sec: predicates}
Let $parent_v$ denote the local variable used to store the parent of $v$ in $T$. Following is the set of local predicates that each node has to compute:
 \subsubsection{Predicates for the root node $r$}\label{Sec: predicates-root}
\begin{enumerate}
\item $parent_r =  null$.
\item $anc_l(r) = \phi$.
\end{enumerate}

\subsubsection{Predicates for a non-root node $v$}\label{Sec: predicates-nonroot}
\begin{enumerate}
\item \label{predicate: nonNull} $parent_v \neq null$.
\item \label{predicate: nonEmpty} $anc_l(v) \neq \phi$.
 \item \label{predicate: sameTrees} $parent_v = parent_l(v)$. The parent of $v$ on $T$ denoted by $parent_v$ is the same as $v$'s parent as computed by $v$ from the labels of $v$ and its neighbors.
\item \label{predicate: childParent} $interval_v$ $\subset$ $interval_{parent_v}$.
\item \label{predicate: childAnc} $\forall u \in anc_l(v)$ such that $u \neq parent_v$ ($interval_{parent_v}$ $\subset$ $interval_u$).
\end{enumerate}

\subsubsection{Predicates for every node(root as well as a non-root) $v$}\label{Sec: predicates-all}
\begin{enumerate}
\item \label{predicate: sanity} $out_v > in_v$.
\item \label{predicate: disjoint} There is no neighbor of $v$ such that its interval is totally disjoint with $v$. Formally \\ $\forall u \in \eta(v)$ ($interval_u \subset interval_v \vee interval_u \supset interval_v$).  
\item \label{predicate: leaf} if $|children_l(v)| = 0 $  then $out_v = in_v + 1$.
\item \label{predicate: firstLastChild} if $|children_l(v)| > 0$ and let $childrenD_l(v)$ denote the list of children of $v$ sorted in ascending order of their $in$ labels and $firstChild_l(v)$ and $lastChild_l(v)$ be the first and last members of $childrenD_l(v)$ then $in_{firstChild_l} = in_v +1$ $\wedge$ $out_v = out_{lastChild_l} + 1$.  
\item \label{predicate: childrenOrder} if $|children_l(v)| > 1$ and let $childrenP_l(v)$ denote the list of children of $v$ sorted in the ascending order of their port numbers in $v$, then $childrenD_l(v)$ and $childrenP_l(v)$ sort the members of $children_l(v)$ in the same order.
\item \label{predicate: childDesc} Let $u$ and  $w \in desc_l(v)$, $u \neq w$, such that $u \in children_l(v)$ and $w \notin children_l(v)$ and $in_u < in_w$ then $\alpha_v(u) < \alpha_v(w)$.
\item\label{predicate: inOut} $\forall (u,w) \in childrenD_l(v)$ such that $u$ and $w$ are adjacent in $childrenD_l(v)$ and $in_u < in_w$, then $in_w = out_u + 1$ 
\end{enumerate}

\begin{remark}
 The only predicates that deal with the order in which the neighbors of a node are explored are~\ref{predicate: childrenOrder} and~\ref{predicate: childDesc} of Section~\ref{Sec: predicates-all}. Omitting these two Predicates leaves us with a set of predicates sufficient to verify that $T$ is \emph{some} DFS tree(may not be same as the initial input to the verifier) of $G$. If an algorithm that uses the verifier as a subtask is not concerned about the order, it can simply drop these predicates.
 \end{remark}


\section{Generating the Labels: \emph{Marker} $\mathcal{M}$}\label{Sec:marker}
A natural method for assigning the $(in, out)$ labels is to perform an actual DFS traversal of the network starting from the root. The required labels can be generated by augmenting some known DFS tree construction algorithm (e.g.~\cite{chlamtac1987tree},~\cite{DFSConstruction1},~\cite{DFSConstruction2}) by adding new variables for the labels and specific actions for updating these label variables. We assume that the DFS construction algorithm of Awerbuch~\cite{DFSConstruction1} can be easily translated to shared memory and the resulting algorithm can be easily augmented with actions to update the $in$ and $out$ labels. Note that translating~\cite{DFSConstruction1} to shared memory is trivial and it decreases the memory from $O(\Delta)$ to $O(log\Delta)$, if it changes memory at all, since a node does not need to store the \emph{VISITED} message(the message broadcasted by a node to all its neighbors when it is visited for the first time, See~\cite{DFSConstruction1}) of a neighbor, instead it can read the 
shared register of the neighbor.The pseudo code of the marker will appear in the full paper.

\section{The Silent-Stabilizing DFS Construction Algorithm}\label{Sec:Stabilization}
We have constructed a \emph{proof labeling scheme} $(\mathcal{M}, \mathcal{V})$ with a non-stabilizing marker $\mathcal{M}$ that takes as input a connected graph $G$ and assigns $(in, out)$ labels to every node in $G$. It also has a verifier $\mathcal{V}$ that takes as input a labeled (with $(in,out)$ intervals) distributed data structure and verifies whether the input structure is the \emph{first} DFS tree. The proofs for the correctness and the performance of $(\mathcal{M}, \mathcal{V})$ are presented in Section~\ref{Sec:Analysis}. In the meanwhile, we use them here assuming they are correct.

A simple way to stabilize any input/output algorithm is to run the algorithm repeatedly to maintain the correct output along with a self-stabilizing synchronizer~\cite{AwerbuchTransformer}. This however would not be a silent algorithm. Still, let us use this approach to generate a non-silent self-stabilizing algorithm as an exercise, before presenting the silent one. Awerbuch and Varghese, in their seminal paper~\cite{AwerbuchTransformer}, present a transformer algorithm for converting a non-stabilizing input/output algorithm into its self-stabilizing version.  
Following theorem is taken from the paper of Awerbuch and Varghese~\cite{AwerbuchTransformer}:
\begin{theorem}\label{AwerbuchVargheseTheorem}
Given a non-stabilizing distributed algorithm $\Pi$ to compute an input/output relation with a space complexity of $S_\Pi$ and a time complexity of $T_\Pi$. The Resynchronizer compiler produces a self-stabilizing version of $\Pi$ whose time complexity is $O(T_\Pi + \hat{D})$ and whose space complexity is same as that of $\Pi$, where $\hat{D}$ is an upper bound on the diameter of the network.
\end{theorem}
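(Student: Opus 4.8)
Since this statement is quoted verbatim from~\cite{AwerbuchTransformer}, the plan is not to reprove it from first principles but to recall the construction that realizes it and to check that it delivers the claimed bounds. The Resynchronizer compiles $\Pi$ by running it \emph{underneath} a self-stabilizing global-reset layer: the distinguished root node repeatedly launches a reset that reinitializes every node's $\Pi$-variables to the canonical starting configuration of $\Pi$ and then re-executes $\Pi$ from scratch. Because $\Pi$ computes an input/output relation it stabilizes its output within $T_\Pi$ time from a clean start, so after each completed reset the output is correct; repeating the reset periodically keeps the output correct as long as the inputs do not change, which is exactly output self-stabilization.

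First I would build (or invoke as a black box) the self-stabilizing reset layer: a self-stabilizing BFS tree rooted at the distinguished node, of depth at most $\hat{D}$, together with a reset wave that floods down the tree and an acknowledgement wave that returns up it, each traversing the tree in $O(\hat{D})$ time. When a node is hit by the down-wave it overwrites its $\Pi$-state with the initial state and begins simulating $\Pi$ in lock-step with its neighbours (the reset layer also supplies the coarse synchronization $\Pi$ needs); $T_\Pi$ time later the simulation has produced the correct output, which is then copied to the output registers and frozen until the next reset cycle.

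Next I would carry out the complexity accounting. For time: starting from an arbitrary configuration, the reset layer is itself self-stabilizing, and a fresh legitimate reset is initiated by the root and completes within $O(\hat{D})$ time; the subsequent recomputation of $\Pi$ costs $T_\Pi$; hence total convergence time is $O(T_\Pi + \hat{D})$. For space: every node keeps its $S_\Pi$ bits of $\Pi$-state plus $O(\log \hat{D})$ bits of bookkeeping (BFS distance estimate, phase/round counters, reset status flag); in the regime of interest here, where $\hat{D}\le n$ and $S_\Pi=\Omega(\log n)$, this additive term is dominated and the per-node space is $O(S_\Pi)$, matching the statement.

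The delicate point, and the step I expect to consume most of the care, is the self-stabilization of the reset layer itself: from an arbitrary configuration there may be several spurious, partially completed reset waves and an inconsistent BFS structure, and one must argue that all such ``garbage'' waves are detected and extinguished --- and that a clean wave emanating from the legitimate root is triggered --- within $O(\hat{D})$ time, using only bounded ($O(\log\hat{D})$-bit) counters so that counter overflow cannot be mistaken for progress. This is precisely where the local-checking machinery of~\cite{AwerbuchTransformer} enters: the root is the unique legitimate source of resets, every other node locally checks consistency of its reset/phase variables against those of its BFS parent, and an inconsistency causes the node to reject and suppress the wave rather than forward it. Establishing that these local rules drive the reset layer to a legal configuration in $O(\hat{D})$ time is the heart of the argument; the remaining accounting is routine.
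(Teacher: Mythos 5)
The paper offers no proof of this theorem: it is imported verbatim from~\cite{AwerbuchTransformer}, accompanied only by an informal gloss (a self-stabilizing synchronizer that runs $\Pi$ for $T_\Pi$ rounds, retains the output, re-executes, compares, and repeats). Your reconstruction of the Resynchronizer --- a self-stabilizing reset layer over a BFS tree of depth $O(\hat{D})$, local checking to extinguish spurious waves, then re-execution of $\Pi$ from a clean state --- is consistent with that gloss and with the cited construction, and your caveat that the $O(\log \hat{D})$ bookkeeping bits are only absorbed when $S_\Pi = \Omega(\log n)$ is a fair refinement of the slightly loose ``same space as $\Pi$'' claim; in short, you take essentially the same route as the paper, namely deferring to~\cite{AwerbuchTransformer} while recalling its mechanism.
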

 Informally, the transformer that Awerbuch and Varghese developed to prove the above theorem is a self-stabilizing synchronizer. The transformer takes as input a non-stabilizing input/output algorithm $\Pi$ whose running time and space requirement are $T_\Pi$ and $S_\Pi$ respectively. Another input it takes is $\hat{D}$ which is an \emph{upper bound} on the actual diameter $D$ of the network. Given these inputs, the transformer performs $\Pi$ for $T_\Pi$(recall that the transformer is a synchronizer and transforms the network to be synchronous). Then it retains the results, performs $\Pi$ again and compares the new results to the old ones. If they are the same, the old results are retained. if they differ, then some faults occurred, the new results are retained. This is repeated forever.

Since we do not assume the knowledge of $n$ (required for input : $T_\mathcal{M}$) or $\hat{D}$, we use a slightly modified version of theorem~\ref{AwerbuchVargheseTheorem} here, that appeared in~\cite{KormanKuttenSSMST}. The modified Awerbuch Varghese theorem presented in~\cite{KormanKuttenSSMST} is as follows:
\begin{theorem}\label{KuttenKormanTheorem}
Given a non-stabilizing distributed algorithm $\Pi$ to compute an input/output relation with a space complexity of $S_\Pi$ and a time complexity of $T_\Pi$. The enhanced Resynchronizer compiler produces a self-stabilizing version of $\Pi$ whose time complexity is $O(T_\Pi + n)$ for asynchronous networks and $O(T_\Pi + D)$ for synchronous networks with a space complexity of $O(S_\Pi + logn)$.
\end{theorem}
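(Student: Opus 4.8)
The plan is to reduce Theorem~\ref{KuttenKormanTheorem} to Theorem~\ref{AwerbuchVargheseTheorem} by supplying, in a self-stabilizing manner, the two quantities that the plain Resynchronizer of~\cite{AwerbuchTransformer} needs but that are not available to us: an upper bound $\hat{D}$ on the diameter, and (implicitly, through $T_\Pi$) the size $n$ of the network. First I would run, as an independent module composed in parallel with the Resynchronizer via fair combination of protocols (in the style of~\cite{DolevIsraeliMoran}), a standard self-stabilizing subroutine that computes $n$ at the root and floods it to all nodes --- for instance, maintain a self-stabilizing BFS tree rooted at the distinguished processor and have each node report to its parent the number of nodes in its subtree, so that the root obtains $n$ and broadcasts it back down. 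Such a module stabilizes in $O(D)$ rounds in the synchronous model and in $O(n)$ time in the asynchronous model, and it uses only $O(\log n)$ bits per node. Since $D \le n$, the value computed also serves as the required upper bound $\hat{D}$.

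Next I would feed this computed value into the Resynchronizer of Theorem~\ref{AwerbuchVargheseTheorem} as its parameter $\hat{D}$ (and, equivalently, use it to bound the time budget the Resynchronizer allots to $\Pi$). By the composition principle for self-stabilizing protocols, once the \emph{upstream} size-computation module has stabilized --- which takes $O(n)$ asynchronous time, resp. $O(D)$ synchronous time --- the \emph{downstream} Resynchronizer receives a correct, stable value $\hat{D} = n$, and from that point on Theorem~\ref{AwerbuchVargheseTheorem} guarantees that it stabilizes within an additional $O(T_\Pi + \hat{D}) = O(T_\Pi + n)$ time (resp. $O(T_\Pi + D)$ in the synchronous case). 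Adding the two phases gives a total stabilization time of $O(T_\Pi + n)$, resp. $O(T_\Pi + D)$; the space is $O(S_\Pi)$ for the embedded copy of $\Pi$ and the Resynchronizer's bookkeeping plus $O(\log n)$ for the size module and the broadcast value, i.e. $O(S_\Pi + \log n)$, as claimed.

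The delicate point --- and the part I expect to be the main obstacle --- is the transient interval during which the size module has not yet converged and the nodes hold an incorrect estimate $\tilde n$ of $n$. If $\tilde n$ is too small, the Resynchronizer might prematurely conclude that $\Pi$ has finished and declare a (wrong) output stable, or its synchronizer might deadlock; if $\tilde n$ is too large it merely wastes time. The fix is to make the composition genuinely hierarchical: treat the current value of $\tilde n$ as an input to the Resynchronizer and force the Resynchronizer to restart its current verification phase whenever $\tilde n$ changes (a purely local check at each node), so that no conclusion drawn on the basis of a stale estimate can survive. One then argues that after the size module stabilizes the estimate stops changing, the last restart of the Resynchronizer occurs within $O(n)$ (resp. $O(D)$) time, and the standard correctness proof of Theorem~\ref{AwerbuchVargheseTheorem} applies verbatim thereafter. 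Verifying that this restart mechanism does not itself inflate the bound --- that the number of restarts is bounded and that each costs only $O(T_\Pi + n)$ --- is the one place where genuine care is needed.
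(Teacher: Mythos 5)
This theorem is not proved in the paper at all: it is imported verbatim from~\cite{KormanKuttenSSMST}, and the only thing the paper offers is the one-sentence gloss that Korman et al.\ ``used a better synchronizer plus a simple self-stabilizing algorithm that computes $n$ \emph{and} $D$.'' Your overall plan --- fair composition of a self-stabilizing size-computation module with the Resynchronizer of Theorem~\ref{AwerbuchVargheseTheorem}, plus a restart-on-estimate-change rule to survive the transient phase --- is consistent with that gloss and, modulo the care you yourself flag about bounding the restarts, would establish the asynchronous bound $O(T_\Pi+n)$, which is the only bound the rest of the paper actually uses.

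There is, however, a genuine gap in your argument for the synchronous claim. You set $\hat{D}=n$ (justified by $D\le n$) and then invoke Theorem~\ref{AwerbuchVargheseTheorem}, whose bound is $O(T_\Pi+\hat{D})$. With $\hat{D}=n$ this yields $O(T_\Pi+n)$ in \emph{both} models; it does not give the claimed $O(T_\Pi+D)$ for synchronous networks, and $D$ can be arbitrarily smaller than $n$ (a clique has $D=1$). To get the synchronous bound you must actually compute a constant-factor estimate of $D$ itself (e.g., the eccentricity of the root via a self-stabilizing BFS layering, which stabilizes in $O(D)$ synchronous rounds) and feed \emph{that} in as $\hat{D}$ --- this is precisely why the cited construction computes ``$n$ and $D$'' rather than $n$ alone --- and one also has to check that the synchronizer's per-phase overhead is $O(D)$ rather than $O(\hat{D})$, which is the role of the ``better synchronizer'' in the enhanced compiler. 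As written, your reduction proves a strictly weaker statement in the synchronous case.
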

Informally, Korman et al. used a better synchronizer plus a simple self-stabilizing algorithm that computes $n$ and $D$ to prove the above theorem.
To obtain a non-silent self-stabilizing DFS construction algorithm, we just plug the marker $\mathcal{M}$ of Section~\ref{Sec:marker} into theorem~\ref{KuttenKormanTheorem} and obtain the following corollary.
\begin{corollary} \label{corollary1}
 There exists a non-silent self-stabilizing DFS construction algorithm that can operate in a dynamic asynchronous network, with a time complexity of $O(T_\mathcal{M} + n)$ and a space complexity of $O(S_\mathcal{M} + \log n)$.
 \end{corollary}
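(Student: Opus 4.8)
The plan is to obtain Corollary~\ref{corollary1} as an immediate instantiation of Theorem~\ref{KuttenKormanTheorem} with $\Pi := \mathcal{M}$, so essentially all the work lies in checking that the marker of Section~\ref{Sec:marker} satisfies the hypotheses of that theorem. First I would argue that $\mathcal{M}$ is a legitimate non-stabilizing distributed algorithm computing an input/output relation: its input is the (fixed) network $G(V,E)$ together with the per-node port orderings $\alpha_v$ and the identity of the root $r$; its output, written into the shared registers, is the collection of $parent_v$ variables together with the $(in_v, out_v)$ labels. I would then recall that $\mathcal{M}$ is the translation of Awerbuch's DFS construction~\cite{DFSConstruction1} to the shared-register model, augmented with the label-update actions, and note that this translation preserves termination, has time complexity $T_\mathcal{M} = O(n)$ since a depth-first traversal crosses each edge a constant number of times, and has space complexity $S_\mathcal{M} = O(\log n)$ because a node stores only its parent port, its $(in,out)$ pair, and a constant amount of DFS bookkeeping --- in particular, reading a neighbor's shared register replaces storing that neighbor's \emph{VISITED} message, so the $O(\Delta)$ memory of~\cite{DFSConstruction1} drops to $O(\log\Delta)=O(\log n)$.

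Next I would invoke Theorem~\ref{KuttenKormanTheorem} essentially verbatim. Because that theorem does not require the algorithm to know $n$ or any bound $\hat D$ on the diameter --- the enhanced Resynchronizer internally runs a self-stabilizing subroutine that computes $n$ and $D$ --- it applies to $\mathcal{M}$ exactly as stated. Plugging in $S_\Pi = S_\mathcal{M}$ and $T_\Pi = T_\mathcal{M}$ yields a self-stabilizing algorithm with time complexity $O(T_\mathcal{M}+n)$ on asynchronous networks and space complexity $O(S_\mathcal{M}+\log n)$, which with the bounds above is $O(n)$ time and $O(\log n)$ space. I would then observe that the resulting algorithm is \emph{non-silent}: the Resynchronizer perpetually re-executes $\mathcal{M}$ and compares successive outputs, so the contents of some communication registers keep changing even in the absence of faults. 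This is precisely why the corollary is offered only as a warm-up toward the silent construction developed later.

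Finally, for the ``dynamic asynchronous network'' clause I would note that a topology change is, from the viewpoint of the Resynchronizer, indistinguishable from a transient corruption of the state: after the change the compiler's comparison phase detects that the retained output no longer matches a fresh execution of $\mathcal{M}$ on the current graph, discards it, re-runs $\mathcal{M}$, and re-converges within $O(T_\mathcal{M}+n)$ time of the last change. The main obstacle I anticipate is not the invocation of Theorem~\ref{KuttenKormanTheorem}, which is mechanical, but the justification of the concrete bounds $T_\mathcal{M}=O(n)$ and $S_\mathcal{M}=O(\log n)$ for the augmented shared-memory version of Awerbuch's algorithm without reproducing its pseudocode; since the full marker is deferred to the full paper, I would keep the corollary stated in terms of $T_\mathcal{M}$ and $S_\mathcal{M}$ and merely remark on the concrete values.
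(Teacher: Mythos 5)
Your proposal matches the paper's own (essentially one-line) argument: the corollary is obtained by plugging the marker $\mathcal{M}$ into Theorem~\ref{KuttenKormanTheorem} as the input/output algorithm $\Pi$, with the complexities following by direct substitution. The additional care you take in verifying the hypotheses and in justifying the concrete bounds $T_\mathcal{M}=O(n)$ and $S_\mathcal{M}=O(\log n)$ is consistent with what the paper establishes elsewhere, so this is the same route, just spelled out more fully.
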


\subsection{Achieving Silent-Stabilization}\label{Sec:SilentStab}
Before going into the details of achieving silence, let us go over how the self-stabilizing synchronizer of the enhanced transformer of theorem~\ref{KuttenKormanTheorem} helps co-ordinate repeated executions of the marker in the algorithm of corollary~\ref{corollary1}. A synchronizer simulates a synchronous protocol in an asynchronous network by using a pulse count at each node which is updated in increments of $1$ subject to certain rules. A node $u$ executes the $i$th step of the algorithm when pulse count at $u$, $pulse_u$ is equal to $i$. The synchronizer maintains the \emph{invariant} that the pulse count of a node $u$ differs from any of its neighbors by at most one. Since the synchronizer module is self-stabilizing, all the nodes may be initialized to an arbitrary pulse count and thus the network may not be synchronized in the beginning. The stabilization time of the synchronizer module of the enhanced transformer is $O(n)$, thus starting from any arbitrary set of pulse counters, the network is 
guaranteed to be synchronized after $O(n)$ time.
The enhanced transformer waits for \emph{sufficient} time for the nodes to get synchronized and then starts the execution of the algorithm to be stabilized, in our case, the marker $\mathcal{M}$. If $T_e$ denotes the pulse count at which all the nodes are synchronized, the nodes run the marker from $T_e$ to $T_e + T_\mathcal{M}$. Due to an allowed difference of at most $1$ between pulse counts of neighboring nodes, the maximum difference between the pulse counts of any two nodes is $D$, the diameter of the network. Thus any node with a pulse count of $T_e + T_\mathcal{M}$ has to wait a maximum of $D$ pulses to be sure that all the nodes in the network have written their output~\cite{AwerbuchTransformer}. The node with a pulse count of $T_e + T_\mathcal{M} + D$ wraps around its pulse count to $0$ which destroys the synchronization. Essentially the first node(s) to \emph{wrap around} invoke the \emph{reset} module of the transformer which brings the nodes back in sync for the next execution of the marker.
To make the algorithm silent-stabilizing, we execute the marker(along with the synchronizer) only once in the beginning to generate the labels. The silence is achieved by turning the synchronizer off after all the nodes have finished executing the marker. As explained above, the nodes can easily detect when the marker has finished by looking at their respective pulse counts. When a node reaches a pulse count of $T_e + T_\mathcal{M} + D$, it stops updating its pulse count, thus turning the synchronizer off. When all the nodes in the neighborhood of a node have reached $T_e + T_\mathcal{M} + D$, it turns on the verifier $\mathcal{V}$. Since $\mathcal{V}$ can detect a fault in exactly one pulse, if one occurs, we can manage without running a synchronizer during the verification. The verifier keeps running repeatedly until a fault occurs. If a node $v$ detects a fault, it invokes the synchronizer of the enhanced transformer again by dropping $v$'s pulse count to $0$. Again, as 
in case of non-silent algorithm, this invokes a reset which resynchronizes the network and subsequently invokes the marker again.
Note that the nodes need not know the $T_\mathcal{M}$ a priori. The running time of $\mathcal{M}$ is a function of $n$, the number of nodes which can be computed in a self-stabilizing manner by the module of the enhanced transformer responsible for computing $n$.
 
\begin{observation}
 The only communication that takes place at each node during verification is the reading of the shared registers of the neighbors. 
 The computations performed during verification do not affect the contents of the shared registers at all, thus ensuring silence as defined in~\cite{Dolev:lowerbound}.
\end{observation}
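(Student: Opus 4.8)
The plan is to unfold the definition of silence from~\cite{Dolev:lowerbound} and check it against the behaviour of the verifier $\mathcal{V}$ once the system has reached a legal configuration. Recall that an algorithm is silent if, from an arbitrary initial state, it converges to a legal global state after which the values stored in the communication (shared) registers do not change. By the discussion in Section~\ref{Sec:SilentStab}, after the marker $\mathcal{M}$ has run once the network reaches a configuration in which every node's shared register holds its $(in,out)$ label, its $parent_v$ pointer, and a frozen pulse count $T_e + T_\mathcal{M} + D$, and from that point on the only module active at a node is $\mathcal{V}$. So it suffices to argue that a running instance of $\mathcal{V}$, applied to such a legal configuration, never modifies any shared register.

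First I would enumerate everything $\mathcal{V}$ does at a node $v$ in one activation: (i) read the shared registers of all neighbors $u \in \eta(v)$ to obtain their labels; (ii) compute the intermediate macros $anc_l(v)$, $parent_l(v)$, $desc_l(v)$, $child_l(v)$, $children_l(v)$ of Section~\ref{sec: data} from $v$'s own label and state together with the labels just read; (iii) evaluate the Boolean predicates of Section~\ref{Sec: predicates}. Step (i) is a read and performs no write. Step (ii) is purely local set/arithmetic manipulation, and, as already noted immediately after the macro definitions, its results need not be — and in our implementation are not — stored; in any case these macros are not part of $v$'s communication register, so computing them changes nothing observable by a neighbor. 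Step (iii) produces only Boolean values that $\mathcal{V}$ consumes internally. Hence the only portion of $\mathcal{V}$ that can touch a shared register is the fault reaction, namely a detecting node dropping its pulse count to $0$ to invoke the reset.

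Next I would rule out that reaction in a legal configuration. By the completeness direction of the proof-labeling-scheme correctness established in Section~\ref{Sec:Analysis} (if $T$ is the \emph{first} DFS tree of $G$ with labels as generated by an actual first-DFS traversal, then every predicate of Section~\ref{Sec: predicates} evaluates to \emph{true} at every node), in a legal configuration no node is a detecting node; therefore the reset/resynchronize action is never triggered and no pulse count — indeed no component of any shared register — is ever written while $\mathcal{V}$ runs. Combined with the previous paragraph this shows that once the legal configuration is reached the shared registers are frozen forever, which is exactly the silence condition of~\cite{Dolev:lowerbound}. The claimed restriction on communication (reads of neighbors' registers only) is the by-product of step (i) above being the sole inter-node interaction.

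The only real obstacle is a bookkeeping one: making precise which variables of a node constitute its \emph{communication} register versus its purely local workspace (so that computing and discarding the macros genuinely counts as ``not affecting the shared registers''), and then invoking the correct direction of the proof-labeling-scheme correctness, i.e. that a legal input is accepted at every node. Once that is pinned down the argument is immediate, since $\mathcal{V}$ is read-only apart from a fault reaction that a legal configuration never triggers.
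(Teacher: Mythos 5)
Your proposal is correct and matches the paper's (implicit) justification: the paper states this observation without a separate proof, relying exactly on the facts you enumerate --- the verifier only reads neighbors' registers, the macros of Section~\ref{sec: data} are intermediate computations that need not be stored, and the predicates are evaluated purely locally. One small inaccuracy worth noting: the ``completeness'' direction you invoke to rule out the fault reaction (correct labels imply every predicate holds) is only asserted informally in Section~\ref{Sec:Verification}, whereas Section~\ref{Sec:Analysis} proves the converse (soundness) direction; this does not affect the validity of your argument for the observation as stated, which concerns only the read-only nature of the verification computations.
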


Thus we obtain a silent-stabilizing DFS construction algorithm.
 The following theorem summarizes our result:
\begin{theorem}\label{thrm: TransformerMV}
 The \emph{proof labeling scheme} $(\mathcal{M}, \mathcal{V})$ for a DFS tree implies a silent-stabilizing DFS construction algorithm, that runs in $O(T_\mathcal{M} + n)$ time with a space complexity of $O(S_\mathcal{M} + S_\mathcal{V} + \log n)$.
\end{theorem}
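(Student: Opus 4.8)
The plan is to assemble the claimed algorithm from two ingredients already in hand and then verify three things. The first ingredient is the construction of Corollary~\ref{corollary1}: the enhanced Resynchronizer of Theorem~\ref{KuttenKormanTheorem} applied to the marker $\mathcal{M}$. The second is the silence modification of Section~\ref{Sec:SilentStab}: run the marker (together with the transformer's synchronizer) exactly once, freeze the synchronizer at every node whose pulse reaches $T_e+T_\mathcal{M}+D$, and then switch the verifier $\mathcal{V}$ on. I would then check: (i) from an arbitrary initial configuration the algorithm converges to a configuration in which every node holds the parent pointer and $(in,out)$ label of the \emph{first} DFS tree; (ii) after convergence no shared register ever changes again; and (iii) the time and space bounds. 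Throughout I would use as black boxes the statements proved in Section~\ref{Sec:Analysis}: that $\mathcal{M}$, run to completion from a clean start, writes exactly the \emph{first}-DFS labels, and that $\mathcal{V}$ is sound and complete, i.e.\ it accepts at every node if and only if the current labels are the \emph{first}-DFS labels of $G$.

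\textbf{Convergence.} I would first invoke the self-stabilization of the synchronizer inside the enhanced transformer: from any pulse configuration the pulse counts and the associated phase bookkeeping become legal within $O(n)$ time, after which the marker is simulated correctly for its $T_\mathcal{M}$ pulses, so by the correctness of $\mathcal{M}$ every node ends up holding its \emph{first}-DFS parent and label. When a node's pulse reaches $T_e+T_\mathcal{M}+D$ it freezes its pulse (switching the synchronizer off) and, once all of its neighbors have done so, switches $\mathcal{V}$ on; by completeness of $\mathcal{V}$, from that point $\mathcal{V}$ accepts at every node and no node resets, so the configuration is fixed. Conversely, if a later transient fault corrupts some node's label, parent pointer, or pulse, soundness of $\mathcal{V}$ guarantees that some node rejects --- here I would use that the verifier needs no synchronizer, since each invocation is a single local round of reading neighbors' registers and evaluating the predicates, so running it repeatedly is trivially self-stabilizing. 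The rejecting node zeroes its pulse, which re-invokes the transformer's \emph{reset}, resynchronizes the network, and re-runs $\mathcal{M}$; the system therefore re-converges within the same $O(T_\mathcal{M}+n)$ bound.

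\textbf{Silence.} Once the system has converged and the verifier is the only active module, the Observation of Section~\ref{Sec:SilentStab} applies: a verification round consists only of each node reading its neighbors' shared registers and evaluating the intermediate macros of Section~\ref{sec: data} and the predicates of Section~\ref{Sec: predicates} locally, writing nothing. The synchronizer is off, so the frozen pulse counts are not rewritten, and the marker is idle. Hence no shared register changes after convergence, which is exactly the notion of silence of~\cite{Dolev:lowerbound}.

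\textbf{Complexity, and the hard part.} For space, each node stores the marker's variables ($S_\mathcal{M}$ bits), whatever persistent state $\mathcal{V}$ requires ($S_\mathcal{V}$ bits --- recall the macros of Section~\ref{sec: data} are recomputed each round and need not be stored), and the $O(\log n)$ bits used by the transformer for the pulse counter and for the self-stabilizing computation of $n$; the sum is $O(S_\mathcal{M}+S_\mathcal{V}+\log n)$. For time, synchronizer stabilization costs $O(n)$, the marker costs $O(T_\mathcal{M})$, waiting for the last node to finish and wrap around costs $O(D)$, which is $O(n)$, and each verification round costs $O(1)$; a fault merely restarts this pipeline, for a total of $O(T_\mathcal{M}+n)$. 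I expect the main obstacle to be a rigorous treatment of the fair composition of the three modules --- the self-stabilizing synchronizer, the one-shot marker, and the always-on verifier --- and in particular of the phase handoff keyed on the predicate ``all my neighbors have pulse $T_e+T_\mathcal{M}+D$''. One must rule out a transient corruption that, with the synchronizer already switched off, leaves the pulse counters inconsistent in a way that neither fires the verifier nor resumes the synchronizer; the clean argument is that any such configuration is itself a rejected configuration of the composite, which forces a pulse reset and hence a full restart, bringing us back to the convergence argument above.
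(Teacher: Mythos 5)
Your proposal follows essentially the same route as the paper, which states this theorem only as a summary of the construction in Section~\ref{Sec:SilentStab}: Corollary~\ref{corollary1}'s composition of the enhanced Resynchronizer with the one-shot marker, the synchronizer frozen at pulse $T_e+T_\mathcal{M}+D$, the always-on verifier that writes nothing (the Observation), and a pulse reset upon rejection, with the same time and space accounting. If anything your write-up is more careful than the paper's, which gives no separate proof and does not address the composition/phase-handoff concern you raise at the end.
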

 
\section{Correctness and Performance Analysis}\label{Sec:Analysis}
In this section, we establish the correctness of our algorithm. The proofs follow easily from the known properties of a DFS tree and the predicates of the verifier. \\
Given a labeled (with $(in, out)$ labels) graph $G(V,E)$ and the distributed representation of a spanning subgraph $T$ of $G$, the following lemmas holds on $G$, if the local interval predicates (Section~\ref{Sec: predicates}) hold true at every node of $G$: 

\begin{lemma}\label{lemma:tree}
$T$ is a spanning tree of $G$.
\end{lemma}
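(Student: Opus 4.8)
The plan is to check the three defining properties of a spanning tree for the parent-pointer structure, which here \emph{is} the subgraph under consideration: $T=\{\,(v,parent_v):v\in V\setminus\{r\}\,\}$, where $r$ is the root. First I would observe that $T$ spans $V$ and uses only edges of $G$. By the root predicates $parent_r=null$, while by Predicate~\ref{predicate: nonNull} every non-root $v$ has $parent_v\neq null$; Predicate~\ref{predicate: nonEmpty} makes $anc_l(v)$ nonempty so that $parent_l(v)$ is well defined, and Predicate~\ref{predicate: sameTrees} forces $parent_v=parent_l(v)$. From the definition of $parent_l$ we get $parent_l(v)\in anc_l(v)\subseteq\eta(v)$, so the marked edge $(v,parent_v)$ is a genuine edge of $G$ incident to $v$. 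Hence every vertex is incident to a marked edge and $T$ spans $V$.

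Next I would count edges. Each of the $n-1$ non-root vertices contributes exactly one marked edge, so $|E(T)|\le n-1$, with equality unless two of these edges coincide as undirected edges; that would require $u=parent_v$ and $v=parent_u$ for some $u\neq v$, i.e.\ a directed $2$-cycle in the parent relation. This is a special case of the cycles ruled out in the next step, so in fact $|E(T)|=n-1$.

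The heart of the argument is acyclicity, via the nesting of DFS intervals. By Predicate~\ref{predicate: sanity} each $interval_v=(in_v,out_v)$ is non-degenerate, and by Predicate~\ref{predicate: childParent} every non-root $v$ satisfies $interval_v\subset interval_{parent_v}$ with \emph{strict} inclusion. Strict interval inclusion is irreflexive and transitive, hence a strict partial order on $\{interval_v:v\in V\}$, so the parent relation admits no directed cycle: a cycle $v_0\to v_1=parent_{v_0}\to\cdots\to v_k=v_0$ would yield $interval_{v_0}\subset interval_{v_0}$, a contradiction (strictness also rules out distinct vertices sharing an interval along such a chain). Thus $T$ is an acyclic graph on $n$ vertices with $n-1$ edges, i.e.\ a tree, and since it spans $V$ it is a spanning tree of $G$.

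I expect the only mildly delicate point to be phrasing the connectivity consequence cleanly in the asynchronous, possibly faulty setting: since $V$ is finite, every non-root vertex has an outgoing $parent$ edge, and the root has none, iterating $parent$ from any $v$ must terminate, and acyclicity forces it to terminate at $r$; hence every vertex has a path to $r$ in $T$ and $T$ is connected, giving an alternative derivation of ``spanning tree'' that does not even need the edge count. The rest — that $\subset$ is a partial order, that $anc_l(v)\subseteq\eta(v)$, and so on — is routine from the definitions in Section~\ref{sec: data}.
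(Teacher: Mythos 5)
Your proof is correct and takes essentially the same route as the paper's: acyclicity follows from the strict nesting of intervals forced by predicate~\ref{predicate: childParent} of Section~\ref{Sec: predicates-nonroot} (a cycle in the parent relation would give $interval_v \subset interval_v$), and the edge count of $n-1$ comes from each non-root node contributing exactly one parent edge. Your extra care about coincident undirected parent edges and the alternative connectivity argument via iterating $parent$ to the root are refinements the paper glosses over, but the core argument is identical.
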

\begin{proof}
In order to prove that a graph is a tree, it is sufficient to prove that it has no cycles and its number of edges is $n-1$, where $n$ is the number of nodes in this graph~\cite{Even}. For the subgraph $T$ of $G$ to have a cycle, one of the ancestors of some node $v \in V$ has to mark $v$ as its parent. However, this leads to a contradiction by predicate~\ref{predicate: childParent} of Section:~\ref{Sec: predicates-nonroot} which requires that the interval of a node be included in the interval of its parent. Applying predicate~\ref{predicate: childParent} to $v$ and $v$'s ancestors, implies that for an ancestor $u$ of $v$ which points to $v$ as its parent, $interval(v) \subset interval(u) \wedge interval(u) \wedge interval(v)$, a contradiction.
The parent pointer of each node $v \in V$ except the root comprises of a single incident edge of $v$ and the parent pointer of the root is $null$, therefore there are exactly $n$ nodes and $n-1$ edges in $T$. 
\end{proof}

\begin{observation}\label{observation: sameTree}
The \emph{macros} defined in Section~\ref{sec: data} extract (periodically) a perceived tree $T_l$ from the $(in, out)$ labels of the nodes in $G$.  
\end{observation}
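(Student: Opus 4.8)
The plan is to exhibit $T_l$ explicitly as the directed graph on $V$ whose edges are exactly the pairs $(v, parent_l(v))$ ranging over all non-root nodes $v$, and then to check that this object is a rooted spanning tree. The essential point to keep in mind is that $parent_l$, $children_l$, and the auxiliary sets $anc_l, desc_l$ are functions of the $(in,out)$ labels of $v$ and of $v$'s neighbors \emph{alone}, so $T_l$ is genuinely ``perceived from the labels'' and is a priori a different object from the stored-pointer tree $T$ of Lemma~\ref{lemma:tree}. I would therefore deliberately avoid any appeal to the local variables $parent_v$ or to Predicate~\ref{predicate: sameTrees}: those relate $T_l$ to $T$, but they play no role in seeing that $T_l$ is itself a tree.

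First I would establish that $parent_l(v)$ is a well-defined single node for each non-root $v$. By Predicate~\ref{predicate: nonEmpty} the set $anc_l(v)$ is nonempty, so a candidate exists. For the uniqueness of the $\subset$-minimal element named in the definition of $parent_l$, I would show that $anc_l(v)$ is totally ordered by interval inclusion: any two members $u,w \in anc_l(v)$ both have intervals containing $interval_v$, so by the nested-or-disjoint property of DFS intervals (reinforced at the neighbor level by Predicate~\ref{predicate: disjoint}, which forbids a neighbor disjoint from $v$) they cannot be disjoint, and hence one interval includes the other. A nonempty finite chain has a unique minimum, and this minimum is precisely $parent_l(v)$. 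For the root, $anc_l(r)=\phi$ by its defining predicate, so $parent_l(r)$ is undefined and $r$ is the natural root of $T_l$.

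Next I would verify the tree structure of $T_l$. By construction each non-root node contributes exactly one edge and the root contributes none, so $T_l$ has $n$ nodes and $n-1$ edges. For acyclicity I would walk up the parent pointers: since $w=parent_l(v)$ forces $interval_v \subsetneq interval_w$ (the $in/out$ predicates make the endpoints all distinct, so distinct nodes carry distinct intervals and the inclusion is strict), the interval grows strictly along the chain $v, parent_l(v), parent_l(parent_l(v)), \dots$. A repeated node would force an interval to strictly contain itself, which is impossible; hence the chain is acyclic and, the intervals being finite in number, it terminates at the unique node with empty $anc_l$, namely $r$ (the root's predicate gives $anc_l(r)=\phi$, while Predicate~\ref{predicate: nonEmpty} keeps every non-root node off that list). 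Thus every node reaches $r$ through its parent chain, and an acyclic connected graph on $n$ nodes with $n-1$ edges is a tree. Finally I would note that the $children_l$ macro is the inverse relation of $parent_l$ on this structure, so the two macro families describe one and the same $T_l$, which is exactly what the observation asserts.

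I expect the main obstacle to be the well-definedness of $parent_l$ in the second step, that is, guaranteeing that the neighboring ancestors of a node form a \emph{chain} rather than merely a collection each comparable to $v$. This is precisely where the laminar (nested-or-disjoint) nature of the intervals is indispensable: without it, two neighbors of $v$ could each contain $interval_v$ while partially overlapping one another, leaving the $\subset$-minimal ancestor ambiguous and $parent_l(v)$ ill-defined. I would make sure to pin this property down explicitly --- either as the stated DFS interval property or as a consequence derivable from the interval predicates --- before invoking the ``unique minimum of a finite chain'' argument.
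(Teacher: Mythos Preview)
The paper does not prove this Observation at all: it is treated as a definitional naming (the sentence immediately following it simply says that $T_l$ is ``extracted by having each node compute its perceived parent $parent_l$ as well as its perceived children $children_l$''), and the substantive work---that $T_l$ is well defined and coincides with $T$---is deferred to Lemma~\ref{lemma: sameTree}, whose proof \emph{does} use Predicate~\ref{predicate: sameTrees} together with the $parent_v$-based predicates. So you are proving strictly more than the paper asks for here, and by a different route than the paper eventually takes.

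More importantly, your route has a real gap exactly where you suspected. You need $anc_l(v)$ to be totally ordered by inclusion (or at least to have a unique $\subset$-minimum) for $parent_l(v)$ to be a single node. You appeal to ``the nested-or-disjoint property of DFS intervals,'' but in the verification setting the $(in,out)$ labels are arbitrary inputs, not the output of an actual DFS; that laminar property is precisely part of what must be \emph{derived} from the predicates, not assumed. Predicate~\ref{predicate: disjoint} only says each neighbor of $v$ is nested with $v$; it says nothing about two neighbors $u,w\in anc_l(v)$ being nested with each other, since $u$ and $w$ need not be adjacent. The only predicates that pin down a unique minimum in $anc_l(v)$ are Predicate~\ref{predicate: childParent} (which puts $parent_v$ into $anc_l(v)$) and Predicate~\ref{predicate: childAnc} (which makes $interval_{parent_v}$ the narrowest among them)---both of which reference $parent_v$, the very object you chose to avoid. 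Your closing paragraph correctly flags this as the obstacle, but neither of your proposed fixes works: the DFS interval property is unavailable a priori, and deriving laminarity from the predicates forces you back through $parent_v$. The clean repair is to drop the self-imposed restriction and use Predicates~\ref{predicate: childParent} and~\ref{predicate: childAnc} to get well-definedness of $parent_l(v)$, which is essentially what the paper does inside Lemma~\ref{lemma: sameTree}.
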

While input tree $T$ is encoded only by the collection of the parent pointers of the nodes, $T_l$ is extracted by having each node compute its perceived parent, denoted by $parent_l$ as well as its perceived children, denoted by the set $children_l$ on $T_l$.
\begin{lemma}\label{lemma: sameTree}
 For any node $v \in V$, the set of children of $v$ in $T$ is same as the set of perceived children of $v$ in $T_l$. 
\end{lemma}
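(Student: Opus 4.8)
The plan is to prove both inclusions, $\{u\in V:parent_u=v\}\subseteq children_l(v)$ and $children_l(v)\subseteq\{u\in V:parent_u=v\}$. First I would record that, by Predicate~\ref{predicate: sameTrees} (which is in fact already forced by Predicates~\ref{predicate: childParent} and~\ref{predicate: childAnc}), $parent_u=parent_l(u)$ for every non-root $u$, while $parent_r=null$ and $anc_l(r)=\phi$; hence the children of $v$ in $T$ are precisely the nodes $u$ with $parent_l(u)=v$, so the lemma is exactly the assertion that the two macros $parent_l$ and $children_l$ are mutually consistent. The whole argument rests on one structural fact — call it the \emph{nesting property} — namely that for distinct nodes $a,b$ one has $interval_a\subsetneq interval_b$ iff $a$ is a proper descendant of $b$ in $T$, and $interval_a$ is disjoint from $interval_b$ iff $a$ and $b$ are incomparable in $T$. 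For an actual depth-first traversal this is the classical interval property recalled in Section~\ref{Sec: Definitions}; here it has to be re-derived from the (purely local) predicates of the verifier.

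Granting the nesting property, both inclusions are short. For ``$\supseteq$'': if $parent_u=v$ then $(u,v)\in E$ since a parent pointer is an incident edge, so $u\in\eta(v)$, and $interval_u\subsetneq interval_v$ by Predicate~\ref{predicate: childParent}, hence $u\in desc_l(v)$. If $u$ were not maximal in $desc_l(v)$ there would be $z\in\eta(v)$, $z\neq u$, with $interval_u\subsetneq interval_z\subsetneq interval_v$; by the nesting property $u$ is then a proper descendant of $z$, so $z$ is an ancestor of $u$ and therefore (as $parent_u=v$) $z$ equals $v$ or a proper ancestor of $v$ — neither of which is a proper descendant of $v$, contradicting $interval_z\subsetneq interval_v$. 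Hence $u\in children_l(v)$. For ``$\subseteq$'': let $w\in children_l(v)$; then $w\in\eta(v)$ and $interval_w\subsetneq interval_v$ because $children_l(v)\subseteq desc_l(v)$, so by the nesting property $w$ is a proper descendant of $v$ in $T$. If $parent_w\neq v$, let $c$ be the child of $v$ on the tree path from $v$ down to $w$; then $c\neq w$, and $(v,c)\in E$ so $c\in\eta(v)$, and, again by the nesting property, $interval_w\subsetneq interval_c\subsetneq interval_v$, so $c\in desc_l(v)$ with an interval strictly larger than that of $w$ — contradicting the maximality of $w$. Hence $parent_w=v$.

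It remains to establish the nesting property. One half is immediate: if $a$ is a proper descendant of $b$ in $T$, telescoping Predicate~\ref{predicate: childParent} along the tree path from $a$ to $b$ gives $interval_a\subsetneq interval_b$; in particular comparable nodes get strictly nested intervals, so once the disjointness of incomparable pairs is known, distinct nodes cannot share an interval and the remaining ``only if'' statements follow formally. The substantive claim is: \emph{incomparable nodes in $T$ have disjoint intervals}. I would prove this together with the lemma, by induction on the interval length $out_v-in_v$, the inductive statement at a node $v$ being ``$children_l(v)=\{u:parent_u=v\}$, and any two incomparable nodes lying in the $T$-subtree rooted at $v$ have disjoint intervals''. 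The base case $out_v=in_v+1$ is vacuous: for purely numerical reasons (Predicate~\ref{predicate: sanity}) $desc_l(v)=\phi$, and since $\{u:parent_u=v\}\subseteq desc_l(v)$, both this set and $children_l(v)$ are empty and $v$ is a leaf of $T$. In the inductive step, given incomparable $a,b$ in the subtree of $v$ one passes to their least common ancestor $q$: if $q\neq v$ then $interval_q\subsetneq interval_v$ and the hypothesis at $q$ applies; if $q=v$ one has to show that the two children $c_a,c_b$ of $v$ on the paths to $a$ and to $b$ have disjoint intervals, which follows from Predicates~\ref{predicate: leaf},~\ref{predicate: firstLastChild} and~\ref{predicate: inOut} at $v$ — these force the intervals of the members of $children_l(v)$, listed by increasing $in$-label, to tile $(in_v,out_v)$ into consecutive, pairwise disjoint blocks — once one knows $c_a,c_b\in children_l(v)$, i.e. the ``$\supseteq$'' inclusion of the lemma at $v$. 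Predicate~\ref{predicate: disjoint} is used along the way to discard, among graph-neighbours, intervals that are neither nested nor disjoint.

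The step I expect to be the main obstacle is exactly this last circularity: the ``$\supseteq$'' half of the lemma at $v$ is used to prove that the intervals of $v$'s children are disjoint, while the nesting/disjointness property is used to prove the lemma. Untangling it calls for packaging the lemma and the interval-nesting property into a \emph{single} induction whose inductive step at $v$ only ever appeals to nodes with strictly shorter intervals — the (macro) children of $v$ — and it is the careful bookkeeping of this induction, in particular ruling out properly overlapping intervals of non-adjacent nodes and coincidences of the $in$- or $out$-labels of distinct nodes, rather than the invocation of any single predicate, that carries the weight of the proof.
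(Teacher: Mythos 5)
Your reduction of the lemma to the mutual consistency of the macros $parent_l$ and $children_l$, and your two short inclusion arguments, are fine \emph{granted} what you call the nesting property; and your parenthetical remark that Predicate~\ref{predicate: sameTrees} is forced by Predicates~\ref{predicate: childParent} and~\ref{predicate: childAnc} is correct. But the nesting property is exactly where the proof has to happen, and you do not actually establish it: you name the circularity (sibling-interval disjointness needs $children_l(v)=\{u:parent_u=v\}$, which needs the nesting property), propose to resolve it by a combined induction on interval length, and then state that ``careful bookkeeping'' will carry the weight. That is a plan, not a proof, and the plan as sketched does not obviously close. Your inductive statement at $v$ quantifies over nodes \emph{in the $T$-subtree rooted at $v$}, yet the ``$\supseteq$'' step at $v$ applies ``nested implies descendant'' to a pair $(u,z)$ where $z$ is merely a graph-neighbour of $v$ with $interval_u\subsetneq interval_z\subsetneq interval_v$ --- whether $z$ lies in the subtree of $v$ (or of anything already processed) is precisely the kind of fact the nesting property is supposed to deliver, so the induction hypothesis does not visibly apply to $(u,z)$. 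Until the combined induction is written out with a well-founded ordering that covers such pairs, the argument has a genuine gap.

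It is worth contrasting this with the paper's route, which deliberately avoids your circularity. The paper proves Lemma~\ref{lemma: sameTree} by a purely local argument: $parent_l(v)$ is by definition the neighbour of $v$ with the narrowest interval containing $interval_v$, so if $v\notin children_l(p)$ for $p=parent_l(v)$ there must be an intermediate interval strictly between $interval_v$ and $interval_p$, contradicting that minimality. No global correspondence between interval inclusion and tree ancestry is invoked. The disjointness of sibling subtrees --- the substantive half of your nesting property --- is only proved afterwards, in Lemma~\ref{lemma: disjoint}, \emph{using} Lemma~\ref{lemma: sameTree} together with Predicates~\ref{predicate: sanity} and~\ref{predicate: inOut}. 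So the dependency in the paper runs in the opposite direction from the one you set up, which is why the paper's proof is short where yours stalls. If you want to keep your structure, you must either carry out the joint induction in full (and justify its well-foundedness for neighbours not yet known to be descendants), or, more simply, replace the appeal to the nesting property in the two inclusions by the local narrowest-interval argument.
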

\begin{proof}
The predicate~\ref{predicate: sameTrees} of section~\ref{Sec: predicates-nonroot}, ensures that the parent pointer $parent_v$ of a node $v$ on the input tree $T$ is the same as $v$'s perceived parent $parent_l(v)$ on $T_l$. The set of children of a node $v$ on $T$ is implicitly implied by the parent pointers of $v$'s children. Hence, it is sufficient to prove that the set of perceived children of $v$ on $T_l$ is the same as those implied by the perceived parent pointers of perceived children of $v$, i.e., the collection of perceived parents is consistent with the collection of perceived children on $T_l$.
In what follows, we prove that if a node $v$ has a node $p$ as its perceived parent ($parent_l(v) = p$), then $v \in children_l(p)$. Assume, for contradiction, that the above does not hold. Note that, by the definition of a perceived parent and simple inductive arguments, $p$ has the \emph{narrowest} interval of any node whose interval includes $interval_v$, i.e., the interval of $p$ does not include the interval of any other node whose interval includes $interval_v$. 
Having $v \notin children_l(p) \wedge parent_l(v) = p$ implies that there is a node $x \in \eta(p)$ with $interval_x \supset interval_v$ and moreover $interval_p \supset interval_x$. This implies that $p$ can not be the parent of $v$.
In a similar way, one can prove that if $c \in children_l(v)$ then $v$ is the perceived parent of $c$. 
\end{proof}

Following lemma~\ref{lemma: sameTree}, in the discussion that follows, $children_l(v)$ implies the children of $v$ in $T$ and vice versa. 
\begin{lemma} \label{lemma: disjoint}
For any two children $u, w$ of a node $v$ in $T$, the intervals of all the nodes in the subtree of $u$ in $T$ are disjoint from the intervals of all the nodes in the subtree of $w$ in $T$.
\end{lemma}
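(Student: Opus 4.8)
The plan is to reduce the claim about whole subtrees to a claim about the two children $u,w$ themselves, and then to exploit two facts already forced by the predicates: intervals nest along parent edges, and the intervals of the children of a node are pairwise ``laid out from left to right'' by the clock values.

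First I would record the \emph{subtree nesting} fact: for every node $x$ and every node $y$ in the subtree of $x$ in $T$ one has $interval_y \subseteq interval_x$, hence $in_x \le in_y$ and $out_y \le out_x$. This follows by applying Predicate~\ref{predicate: childParent} of Section~\ref{Sec: predicates-nonroot} repeatedly along the path from $y$ up to $x$ in $T$ and using transitivity of interval inclusion. (By Lemma~\ref{lemma: sameTree} the tree $T$ and the perceived tree agree, so ``children of $v$'' may be read as $children_l(v)$ throughout.)

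Next I would establish the \emph{separation of the two children}. Since $u,w \in children_l(v)$ and $|children_l(v)| \ge 2$, both appear in the list $childrenD_l(v)$ sorted in ascending order of $in$-labels; assume w.l.o.g.\ $u$ precedes $w$, so $in_u < in_w$. Writing the segment of $childrenD_l(v)$ from $u$ to $w$ as consecutive pairs and applying Predicate~\ref{predicate: inOut} to each pair, together with Predicate~\ref{predicate: sanity} ($out>in$ at every node), one gets an interleaved chain $in_u < out_u < in_{u'} < out_{u'} < \dots < in_w$, where $u'$ is the successor of $u$; in particular the $in$-labels of the children are pairwise distinct (so the sorted list is well defined) and $out_u < in_w$. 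Combining with the nesting fact: if $y$ lies in the subtree of $u$ and $z$ in the subtree of $w$, then $out_y \le out_u < in_w \le in_z$, i.e.\ $out_y < in_z$, so $interval_y$ and $interval_z$ occupy disjoint ranges and are therefore disjoint, which is the assertion of the lemma.

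I expect the only delicate point to be the interleaving step: making precise that $out_u < in_w$ holds not merely for adjacent children but for \emph{any} $u$ preceding $w$ in $childrenD_l(v)$, which requires chaining Predicate~\ref{predicate: inOut} along the whole list and using Predicate~\ref{predicate: sanity} to keep the $in$- and $out$-values interleaved. Everything else is routine transitivity of interval inclusion. (Alternatively, Steps one and two could be folded into a single induction on subtree size showing that the intervals of the subtree of each node $x$ refine $interval_x$ consistently with the ordering predicates; I would prefer the direct argument above as it is shorter.)
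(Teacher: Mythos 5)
Your proposal is correct and follows essentially the same route as the paper's own proof: chaining Predicate~\ref{predicate: inOut} together with Predicate~\ref{predicate: sanity} along adjacent pairs of $childrenD_l(v)$ to separate the children's intervals, and then applying Predicate~\ref{predicate: childParent} inductively to nest each subtree inside its root child's interval. Your version is if anything slightly more explicit about the interleaved chain $in_u < out_u < in_{u'} < \dots < in_w$, which the paper compresses into "inductively applying the above argument to every adjacent pair."
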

\begin{proof}

 The set $childrenD_l(v)$ is the set $children_l(v)$ sorted in the ascending order of the $in$ labels of the nodes $\in children_l(v)$ as defined in Section~\ref{Sec: predicates-all}. Let us assume, without loss of generality, that $in_w > in_u$. Consider a node $u' \in \eta(v)$ such that $u'$ is adjacent to $u$ and appears after $u$ in $childrenD_l(v)$(possibly $u'=w$). Applying predicate~\ref{predicate: inOut} of Section~\ref{Sec: predicates-all} to $u$ and $u'$ , $in_{u'} = out_u + 1$. By predicate~\ref{predicate: sanity} of Section~\ref{Sec: predicates-all}, $out_{u'} > in_{u'}$. Thus neither of the two intervals, $interval(u)$ and $interval(u')$, includes the other, i.e. they are totally disjoint.
Applying predicate~\ref{predicate: childParent} of Section:~\ref{Sec: predicates-nonroot} inductively, it is easy to see that the intervals of all the descendants of $u$ in $T$ are included in $u$'s own interval. Similarly, the intervals of all the descendants of $u'$ are included in $u'$'s interval . Therefore, intervals of all the descendants of $u$ are disjoint from the intervals of $u'$ and all its descendants. By inductively applying the above argument to every adjacent pair of nodes in $childrenD_l(v)$ starting from $u'$ to $w$, it is easy to show that the subtrees of any two children of a node have disjoint intervals.
\end{proof}


\begin{lemma}\label{lemma: cross}
For any two children $u, w$ of some node $v$ in $T$, every simple path in $G$ from some node in the subtree of $u$ to any node in the subtree of $w$ in $T$ goes through either $v$ or $v$'s ancestors. 
\end{lemma}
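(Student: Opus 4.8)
The plan is to argue by contradiction, exploiting the interval disjointness established in Lemma~\ref{lemma: disjoint} together with the "no disjoint neighbor" Predicate~\ref{predicate: disjoint} of Section~\ref{Sec: predicates-all}. Suppose there is a simple path $P = (x_0, x_1, \dots, x_k)$ in $G$ with $x_0$ in the subtree $T_u$ of $u$ and $x_k$ in the subtree $T_w$ of $w$, and suppose $P$ avoids $v$ and all of $v$'s ancestors. Since $T_u$ and $T_w$ are disjoint (they are distinct subtrees rooted at distinct children of $v$), the path must leave $T_u$ at some point; let $x_i$ be the last vertex of $P$ lying in $T_u$, so $x_{i+1} \notin T_u$. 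First I would observe that $x_{i+1}$ cannot lie outside the subtree $T_v$ of $v$ altogether: by Predicate~\ref{predicate: disjoint} applied to the edge $(x_i, x_{i+1})$, the intervals of $x_i$ and $x_{i+1}$ must be nested one inside the other; since $x_i \in T_u \subseteq T_v$, its interval is contained in $interval_v$ (by inductive application of Predicate~\ref{predicate: childParent}), so nesting forces either $interval_{x_{i+1}} \subset interval_{x_i}$ (impossible, as that would put $x_{i+1} \in T_u$, the subtree of $u$ being exactly the nodes whose intervals lie in $interval_u$) or $interval_{x_i} \subset interval_{x_{i+1}}$, making $x_{i+1}$ an ancestor of $x_i$ in $T$.

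Next I would pin down which ancestor $x_{i+1}$ can be. The ancestors of $x_i$ in $T$ form a chain from $x_i$ up through $u$, then $v$, then $v$'s ancestors up to the root. Since $P$ avoids $v$ and $v$'s ancestors, $x_{i+1}$ must be a strict ancestor of $x_i$ lying strictly below $v$, i.e.\ $x_{i+1} \in T_u$ — but that contradicts the choice of $x_i$ as the last vertex of $P$ in $T_u$. Hence no such edge $(x_i, x_{i+1})$ can exist, and therefore $P$ cannot leave $T_u$ without first passing through $v$ or an ancestor of $v$; in particular it cannot reach $x_k \in T_w$. This is the contradiction, completing the proof.

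The one point that needs care — and which I expect to be the main obstacle — is making rigorous the repeated claim that "the subtree $T_u$ of $u$ in $T$ is exactly the set of nodes whose intervals are contained in $interval_u$," and symmetrically that a node whose interval strictly contains $interval_{x_i}$ is a genuine $T$-ancestor of $x_i$ rather than merely an "interval ancestor." The forward direction (descendants have nested intervals) is the inductive consequence of Predicate~\ref{predicate: childParent} already used in Lemma~\ref{lemma: disjoint}. The reverse direction needs the disjointness of sibling subtrees from Lemma~\ref{lemma: disjoint}: if $interval_{x_{i+1}} \supset interval_{x_i}$ with $x_{i+1} \in T_v$, then $x_{i+1}$ lies in $T_v$ but cannot lie in any sibling subtree of the branch containing $x_i$ (those have disjoint intervals), so walking up the tree from $x_i$ one stays on a single root-to-$x_i$ path and $x_{i+1}$ must be on it. I would state this as a short auxiliary observation (or fold it into the proof of Lemma~\ref{lemma: disjoint}) and then the argument above goes through cleanly.
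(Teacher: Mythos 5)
Your proposal is correct and rests on exactly the same two ingredients as the paper's proof: Lemma~\ref{lemma: disjoint} (sibling subtrees have disjoint intervals) combined with predicate~\ref{predicate: disjoint} of Section~\ref{Sec: predicates-all} to rule out any \emph{cross} edge between sibling subtrees. You merely localize the contradiction to the first edge leaving the subtree of $u$ instead of the paper's case split on where the alternative path wanders, and you explicitly flag (and correctly sketch how to close) the ``interval-ancestor equals tree-ancestor'' step that the paper leaves implicit.
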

\begin{proof}

\begin{figure}
[ht!]
\centering 
\caption{Figure for proof of lemma~\ref{lemma: cross}}
\subfigure[Case 1:path through a descendant of a sibling of $u$.]{\label{sfig: lemma3Case1} \includegraphics[scale=0.35]{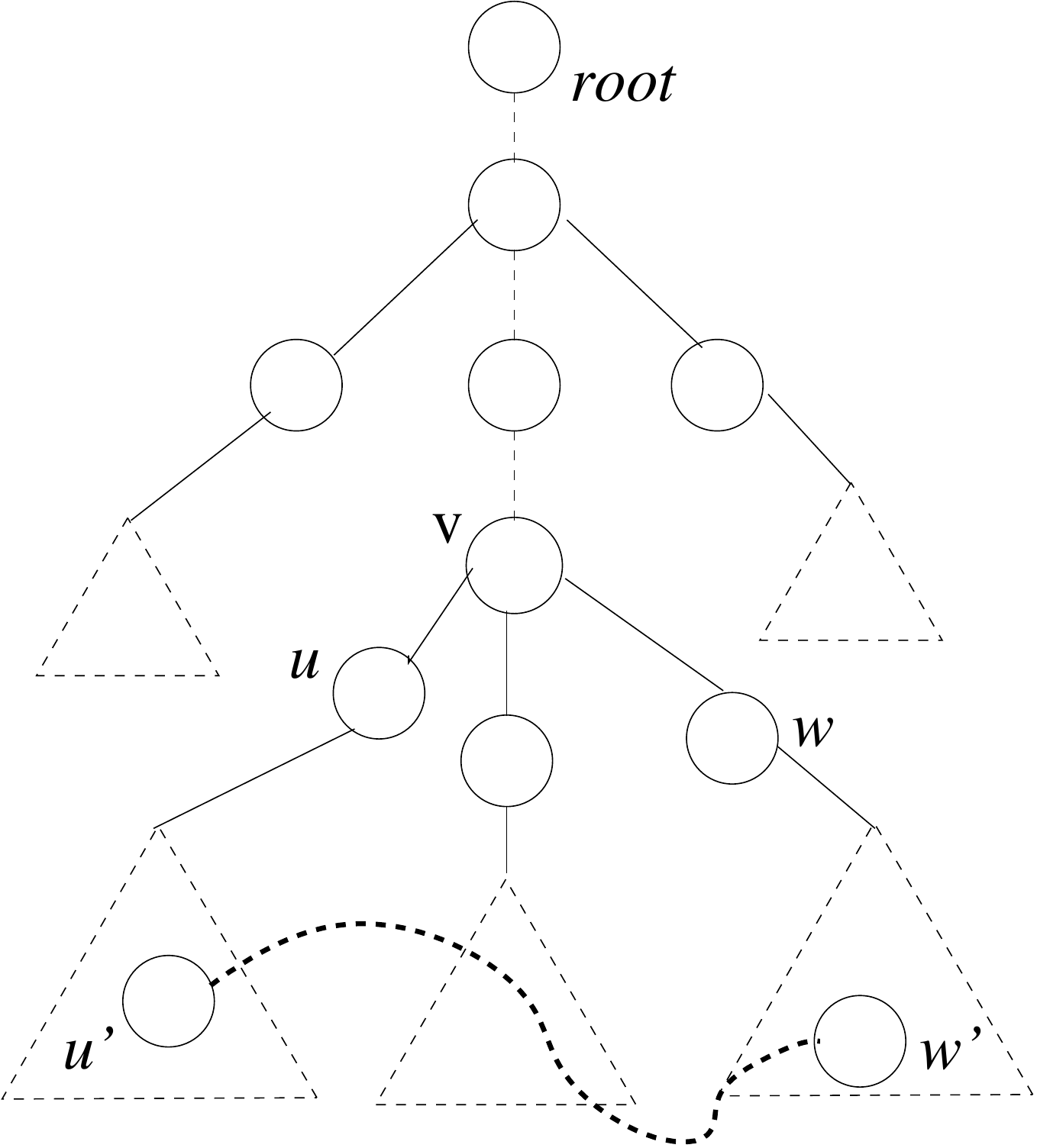}} \hspace{15pt}
\subfigure[Case 2: path through a descendant of a sibling of an ancestor of $u$ and $w$]{\label{sfig: lemma3Case2} \includegraphics[scale=0.35]{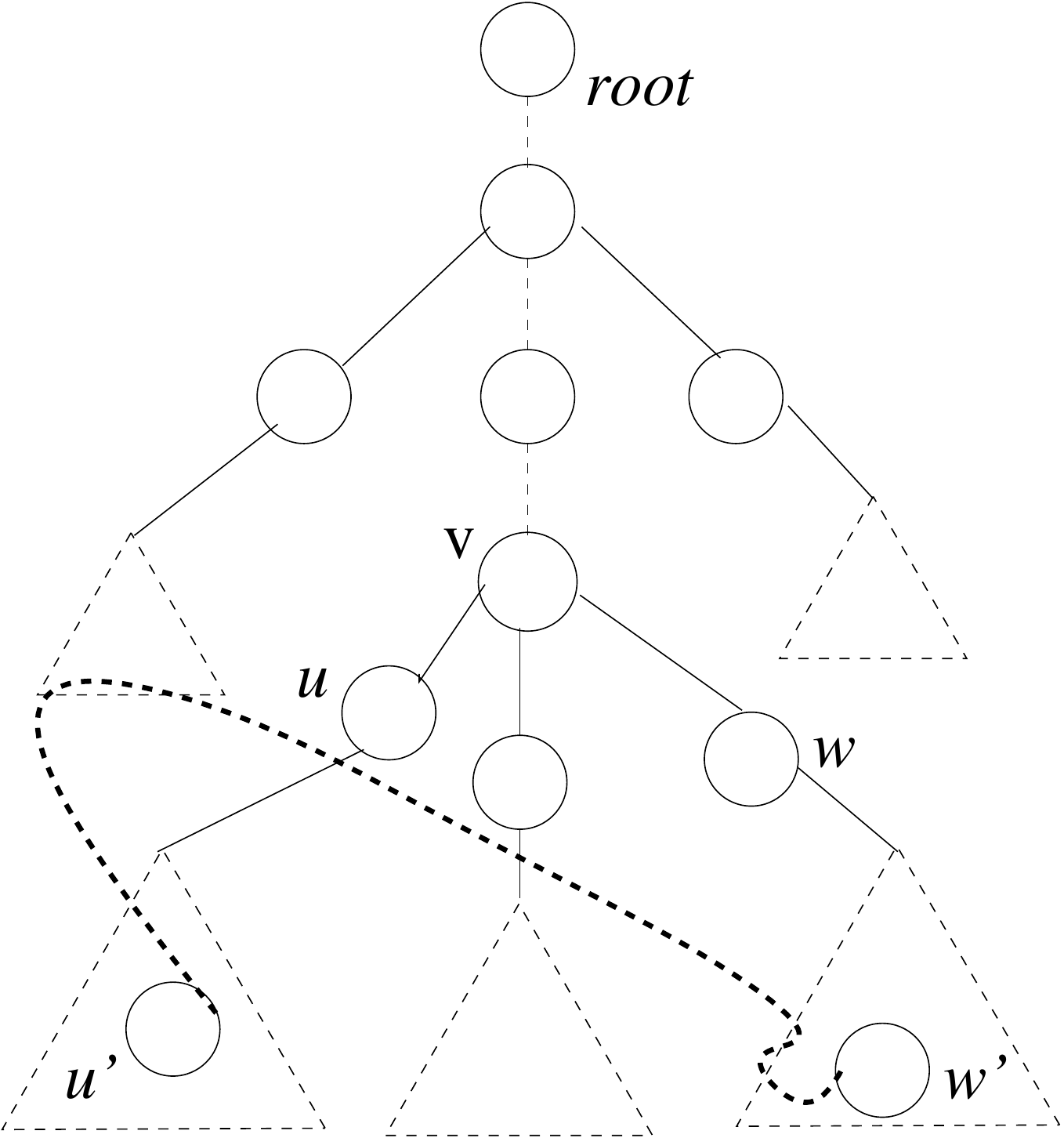}}
\end{figure}
Let $u'$ be some node in the subtree of $u$ and $w'$ be some node in the subtree of $w$. Let us assume, by way of contradiction, that there is a simple path $P$ in $G$ between $u'$ and $w'$ that does not go through $v$ or $v$'s ancestors. There are two possibilities:
\begin{itemize}
 \item $P$ goes through a descendant of a sibling of $u$ (possibly $w$). 
 \item or, it goes through a descendant of a sibling of an ancestor (possibly $v$) of $u$ and $w$.
 \end{itemize}
Both these cases require an edge to exist in $G$ that connects a pair of nodes in two sibling subtrees, known as a \emph{cross} edge~\cite{Even}. By lemma~\ref{lemma: disjoint}, the intervals of all the nodes in the subtree of some node $x$ are disjoint from the intervals of all the nodes in the subtree of a sibling of $x$. Thus, the existence of any such edge in $G$ is ruled out by predicate~\ref{predicate: disjoint} of Section~\ref{Sec: predicates-all}. 
\end{proof}

\begin{observation}
 The proof of Lemma~\ref{lemma: cross} shows that there are no cross edges in the input tree $T$ which implies that $T$ is a DFS(not necessarily the \emph{first} DFS) tree of $G$.
 \end{observation}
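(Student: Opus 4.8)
The plan is to invoke the classical characterization of DFS trees (see~\cite{Even}): a rooted spanning tree $T$ of a connected graph $G$ is a DFS tree of $G$ rooted at $r$ if and only if every non-tree edge of $G$ joins a pair of nodes in an ancestor--descendant relationship in $T$ --- equivalently, $T$ contains no cross edge. Since Lemma~\ref{lemma:tree} already gives that $T$ is a spanning tree of $G$, it remains only to rule out cross edges, and then the observation follows at once.

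First I would fix the rooted-tree structure: the parent pointers together with $parent_r = null$ make $T$ a tree rooted at $r$, so ``ancestor'' and ``descendant'' are well defined on $T$. Next I would record the nesting of intervals along $T$: applying predicate~\ref{predicate: childParent} of Section~\ref{Sec: predicates-nonroot} inductively along root-to-leaf paths shows that if $x$ is an ancestor of $y$ in $T$ then $interval_y \subset interval_x$; and Lemma~\ref{lemma: disjoint} already tells us that the intervals occurring in the subtrees of two distinct children of a common node are pairwise disjoint.

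Now take any non-tree edge $(a,b) \in E$ and suppose, for contradiction, that neither of $a, b$ is an ancestor of the other in $T$. Let $z$ be their lowest common ancestor in $T$; then $z \neq a$ and $z \neq b$, and $a$ lies in the subtree of some child $u$ of $z$ while $b$ lies in the subtree of a different child $w$ of $z$. By Lemma~\ref{lemma: disjoint}, $interval_a$ and $interval_b$ are totally disjoint. But $b \in \eta(a)$, so predicate~\ref{predicate: disjoint} of Section~\ref{Sec: predicates-all} applied at $a$ requires $interval_b \subset interval_a$ or $interval_b \supset interval_a$ --- a contradiction. Hence every non-tree edge of $G$ is an ancestor--descendant (back) edge, i.e., $T$ has no cross edge, and by the characterization above $T$ is a DFS tree of $G$; it need not be the \emph{first} DFS tree, since the port-ordering predicates~\ref{predicate: childrenOrder} and~\ref{predicate: childDesc} of Section~\ref{Sec: predicates-all} are not used here.

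The main obstacle is the direction ``no cross edges $\Rightarrow$ DFS tree'' of the characterization: strictly, one must exhibit a depth-first traversal of $G$ that reproduces exactly $T$. This can be done by the traversal that at each node $v$ explores the children in $children_l(v)$ in increasing order of their $in$ labels (the order $childrenD_l(v)$); Predicates~\ref{predicate: leaf},~\ref{predicate: firstLastChild} and~\ref{predicate: inOut} of Section~\ref{Sec: predicates-all} guarantee that the $(in, out)$ labels are precisely the discovery/finish times of this traversal, while the absence of cross edges guarantees that every edge probed from $v$ either reaches a fresh descendant or leads back to an ancestor, so the traversal never leaves the current subtree prematurely. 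If one is content to cite~\cite{Even} for this equivalence, the argument sketched above is complete.
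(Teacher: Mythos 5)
Your argument is correct and is essentially the paper's own: the observation merely points back to the proof of Lemma~\ref{lemma: cross}, where cross edges are ruled out exactly as you rule them out --- Lemma~\ref{lemma: disjoint} makes the intervals in sibling subtrees pairwise disjoint, and predicate~\ref{predicate: disjoint} of Section~\ref{Sec: predicates-all} forbids an edge between neighbors with disjoint intervals --- after which the conclusion follows from the standard characterization of a DFS tree as a rooted spanning tree with no cross edges~\cite{Even}. Your explicit handling of the converse direction of that characterization (exhibiting a depth-first traversal that realizes $T$, using predicates~\ref{predicate: leaf},~\ref{predicate: firstLastChild} and~\ref{predicate: inOut}) is more careful than the paper, which leaves that step entirely to the citation.
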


\begin{theorem}\label{thrm: verifierC}
 If a graph $G(V, E)$ has every node $v \in V$ labeled with its $(in, out)$ interval and interval assignments are such that all the local interval predicates (Section~\ref{Sec: predicates}) hold true at every node, then the spanning tree $T$ encoded in a distributed manner in the states of all the nodes of $G$ is the \emph{first} DFS tree of $G$. 
\end{theorem}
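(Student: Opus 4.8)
The plan is to leverage the structural facts already proved and then use only the two ``port-order'' predicates, \ref{predicate: childrenOrder} and \ref{predicate: childDesc} of Section~\ref{Sec: predicates-all}, to promote ``DFS tree'' to ``\emph{first} DFS tree''. By Lemma~\ref{lemma:tree}, $T$ is a spanning tree of $G$; by Lemma~\ref{lemma: sameTree} the perceived children $children_l(v)$ are exactly the children of $v$ in $T$; and by the observation following Lemma~\ref{lemma: cross}, $T$ has no cross edges and is therefore \emph{some} DFS tree of $G$. Since a connected rooted graph has exactly one \emph{first} DFS tree, it suffices to show that the DFS traversal encoded by the $(in,out)$ labels is the one that, at each node, always chooses the smallest-port unexplored incident edge. (An alternative route, via the lexicographic characterization recalled in Section~\ref{Sec: LexicOrdering}, would be to show that the root-to-$v$ path in $T$ is $\prec$-minimal among all simple root-to-$v$ paths in $G$; I prefer the traversal argument since it stays closer to the predicates.)

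First I would observe that the labels form a genuine, gap-free DFS numbering consistent with $T$: predicates~\ref{predicate: sanity}, \ref{predicate: leaf}, \ref{predicate: firstLastChild} and~\ref{predicate: inOut}, combined with Lemmas~\ref{lemma: disjoint} and~\ref{lemma: sameTree}, force, for every node $v$, that its first child (in $in$-order) has $in$ equal to $in_v+1$, that the intervals of consecutive children abut, and that $out_v$ equals one plus the $out$ of the last child (or $in_v+1$ when $v$ is a leaf). Hence the labels are literally the discovery/finish times of the DFS traversal of $G$ that visits the children of each node in increasing $in$-order, recursing immediately into the next child and backtracking exactly when the last child's subtree is exhausted; this traversal is well defined and unique, and by predicate~\ref{predicate: childrenOrder} ``increasing $in$-order'' of the children of $v$ coincides with ``increasing port order''.

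The core step is to show that this traversal never bypasses a smaller-port edge leading to an unvisited vertex, so it really is the \emph{first} DFS traversal. I would argue by induction on the sequence of discovery/finish events that at every moment the state (visited set, stack, explored edges) of the labelled traversal equals that of the first-DFS traversal. In the inductive step, suppose the traversal sits at a node $v$ and is about to take the edge to its next child $u_{i+1}$ (taking $u_1$ when $v$ has just been discovered); let $x\in\eta(v)$ with $\alpha_v(x)<\alpha_v(u_{i+1})$. By predicate~\ref{predicate: disjoint}, $interval_x$ is nested with $interval_v$, so by Lemma~\ref{lemma: cross} $x$ is either an ancestor of $v$ or lies in $v$'s subtree. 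If $x$ is an ancestor, it was visited before $v$. Otherwise $x$ lies in the subtree of a unique child $u_k$ of $v$: if $x=u_j$ is itself a child, predicate~\ref{predicate: childrenOrder} and $\alpha_v(u_j)<\alpha_v(u_{i+1})$ give $j\le i$, so $x$ is already visited; if $x$ is a proper descendant of $u_k$ then $in_{u_k}<in_x$ (Lemma~\ref{lemma: disjoint}) and $x\in desc_l(v)\setminus children_l(v)$, so predicate~\ref{predicate: childDesc} yields $\alpha_v(u_k)<\alpha_v(x)<\alpha_v(u_{i+1})$, whence $k\le i$ by predicate~\ref{predicate: childrenOrder} and $x$ was visited while exploring $u_k$'s subtree. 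Therefore $u_{i+1}$ is the smallest-port unexplored neighbor of $v$, matching the first-DFS rule; a symmetric check (no unexplored neighbor remains once the last child's subtree is done) shows the backtracking step also matches, consistently with predicates~\ref{predicate: firstLastChild} and~\ref{predicate: leaf}. Closing the induction, the labelled traversal is the first-DFS traversal, so $T$ is the \emph{first} DFS tree of $G$ (and the $(in,out)$ labels are exactly those it produces).

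The step I expect to be the main obstacle is precisely this case analysis: one must rule out \emph{every} kind of lower-port neighbor of $v$ (ancestors, earlier and later tree-children, back-edge endpoints inside each child's subtree, and a priori nodes outside $v$'s subtree), and the argument goes through cleanly only because Lemma~\ref{lemma: cross} first eliminates cross edges and Lemma~\ref{lemma: disjoint} guarantees that the child subtrees partition the relevant intervals; predicates~\ref{predicate: childrenOrder} and~\ref{predicate: childDesc} by themselves would not suffice. A more routine but necessary point is verifying that predicates~\ref{predicate: sanity}, \ref{predicate: leaf}, \ref{predicate: firstLastChild} and~\ref{predicate: inOut} constrain the interval values tightly enough that ``the traversal encoded by the labels'' is well defined in the first place.
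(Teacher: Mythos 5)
Your proof is correct, but it takes a genuinely different route from the paper's. The paper proves the theorem through the lexicographic characterization you explicitly set aside: it fixes a node $v$, supposes for contradiction that some simple root-to-$v$ path $P^{Alt}_v$ is $\prec$-smaller than the tree path $P^{T}_v$, looks at the first node $v_m$ where they diverge, and does a case analysis on whether the next node of $P^{Alt}_v$ is an ancestor of $v_m$ (not simple), a child of $v_m$ (then Lemma~\ref{lemma: cross} forces the path back through $v_m$ or its ancestors, again not simple), or a non-child descendant (then predicate~\ref{predicate: childDesc} contradicts the assumed port inequality). You instead run a step-by-step simulation showing that the traversal encoded by the labels always takes the smallest-port edge to an unvisited vertex, using predicates~\ref{predicate: childrenOrder} and~\ref{predicate: childDesc} together with Lemmas~\ref{lemma: disjoint} and~\ref{lemma: cross} to rule out every kind of lower-port neighbor. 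Both arguments rest on the same two structural lemmas and the same two ordering predicates, so the substance is shared. The paper's route is shorter because the path characterization localizes the whole argument to a single divergence point and never has to reason about global timing; your route is longer but proves slightly more, namely that the $(in,out)$ labels are exactly the discovery/finish times of the first DFS traversal (the paper never verifies that the labels are a consistent timestamping, only that the tree is right), and it makes explicit where predicates~\ref{predicate: sanity}, \ref{predicate: leaf}, \ref{predicate: firstLastChild} and~\ref{predicate: inOut} are actually consumed. Two small citation slips worth fixing: the fact that a neighbor whose interval is nested with $v$'s must be an ancestor or a descendant of $v$ in $T$ follows from the no-cross-edge observation after Lemma~\ref{lemma: cross} together with Lemma~\ref{lemma: disjoint} and the inductive nesting from predicate~\ref{predicate: childParent}, not from Lemma~\ref{lemma: cross} alone; and $in_{u_k}<in_x$ for $x$ a proper descendant of $u_k$ comes from that same nesting, not from Lemma~\ref{lemma: disjoint}.
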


\begin{proof}
The problem of finding the \emph{first} DFS Tree of a graph can be thought of as the one of selecting the lexicographically smallest simple path of every node $v \in V$ out of all the simple paths from the root to $v$, see~\cite{SelfStabDFS}. Let $P^{T}_v$ denote the path leading from the root to some node $v$ in $T$. We now prove that for any node $v \in V$, $P^{T}_v$ is the lexicographically smallest among all the simple paths from the root to $v$ in $G$. By way of contradiction, let us assume that there is another simple path $P^{Alt}_v$ from the root to $v$ which is smaller than $P^{T}_v$. Let us assume, $w.l.o.g.$, that $P^{T}_v$ and $P^{Alt}_v$ are the same up-to(and including) some node $v_m$, the $m^{th}$ node of the common prefix. Let $v^{T}_{m+1}$ and $v^{Alt}_{m+1}$ denote the $(m + 1)^{th}$ node of $P^{T}_v$ and $P^{Alt}_v$ respectively.

\begin{observation}
For $P^{Alt}_v$ to be lexicographically smaller than $P^{T}_v$, the edge index (as defined in Section~\ref{Sec: LexicOrdering}) $\alpha_{v_m}(v^{Alt}_{m+1})$ must be smaller than the corresponding index $\alpha_{v_m}(v^T_{m+1})$.
\end{observation}

 There are three possibilities for $P^{Alt}_v$ based on how $v^{Alt}_{m+1}$ is related to $v_m$ :
 \begin{enumerate}
\item \emph{$v^{Alt}_{m+1}$ is an ancestor of $v_m$}\label{ancestor}: This case is ruled out since any such path will not be a simple path. 
\item \emph{$v^{Alt}_{m+1}$ is a child of $v_m$}\label{disjointChildren}: $v^{Alt}_{m+1}$ and $v^{T}_{m+1}$ are both children of $v_m$. According to lemma~\ref{lemma: cross}, there is no simple path from $v^{Alt}_{m+1}$ to any node in the subtree of $v^T_{m+1}$ that does not go through $v_m$ or any of its ancestors. Since $v^{T}_{m+1}$ falls on $P^{T}_v$, $v$ belongs to the subtree of $v^{T}_{m+1}$ in $T$. Thus, there is no simple path connecting $v^{Alt}_{m+1}$ to $v$ that does not go through $v_m$ or its ancestors. The path from $v^{Alt}_{m+1}$ to $v$ that goes through either $v_m$ or any of its ancestors would not be a simple path as in case~\ref{ancestor}. Therefore, this case is also ruled out.


\item \emph{$v^{Alt}_{m+1}$ is a descendant which is not a child of $v_m$}: This case can be further subdivided into two sub cases:
\begin{enumerate}
\item \emph{$v^{Alt}_{m+1}$ is also a descendant of $V^T_{m+1}$ in addition to being a descendant of $v_m$}: This implies that $in_{v^{alt}_{m+1}} > in_{v^{T}_{m+1}}$. Also, $v^{T}_{m+1}$ is a child of $v_m$. This leads to a contradiction due to local interval predicate~\ref{predicate: childDesc} (Section~\ref{Sec: predicates-all}) which requires that the edge index of the edge $(v_m, v^{T}_{m+1})$  be smaller than the edge index of the edge $(v_m, v^{Alt}_{m+1})$ in $alpha_{v_m}$.

\item \emph{$v^{Alt}_{m+1}$ is a proper descendant of of $v_m$, but not a descendant of $v^T_{m+1}$} : This case is similar to that of~\ref{disjointChildren}. 
\end{enumerate}
 
\end{enumerate}
\end{proof}

\begin{theorem}\label{thrm: verifierP}
The verifier $\mathcal{V}$ described in section~\ref{Sec:Verification} runs in one time unit and requires $O(\log n)$ bits of memory per node.
\end{theorem}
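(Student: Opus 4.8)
The plan is to establish the time bound and the space bound separately; both follow directly from the structure of the verifier rather than from any subtle argument.

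For the \textbf{time bound}, I would argue that $\mathcal{V}$ requires no multi-hop communication: by the ideal time model of Section~\ref{Sec: Definitions} a node reads the shared registers of its entire neighborhood within one time unit, and every macro of Section~\ref{sec: data} and every predicate of Section~\ref{Sec: predicates} is a function solely of $v$'s own label and state together with the labels of $v$'s neighbors. Hence a node reaches a verdict (\emph{accept} or \emph{reject}) within a single time unit, which is exactly the ``detects a fault in one pulse'' behaviour already relied on in Section~\ref{Sec:SilentStab}. The one point I would make carefully is that, as explained in the space part below, a node cannot cache its whole neighborhood; since the registers are constant during (silent) verification, a node may re-inspect them as needed, and all of this fits inside the single neighborhood-reading time unit that the ideal model charges, with no information having to travel farther than distance one.

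For the \textbf{space bound} I would separate persistent label storage from transient working memory. The persistent part is $in_v$, $out_v$ and the parent pointer $parent_v$: the discovery/finish counters produced by the marker take values in $\{1,\ldots,2n\}$, so $in_v$ and $out_v$ each occupy $O(\log n)$ bits, and $parent_v$ is a single port index, at most $\Delta-1<n$, hence $O(\log n)$ bits as well (the root additionally stores one ``I am the root'' bit). It then remains to argue that deciding all the predicates costs only $O(\log n)$ further bits, which is the heart of the matter: the sets $anc_l(v)$, $desc_l(v)$, $children_l(v)$ and the sorted lists $childrenD_l(v)$, $childrenP_l(v)$ may each have size $\Theta(\deg(v))=\Theta(n)$ and so must never be materialized. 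I would show instead that every predicate can be decided by a small number of streaming passes over the neighbor registers, each pass maintaining only a constant number of $O(\log n)$-bit quantities (running minima/maxima of $in$, $out$ and port values, the narrowest ancestor interval seen so far, and the port and $out$-value of the most recently emitted child). The enabling fact is the laminar, nested-or-disjoint structure of the intervals, which is part of the definition and is enforced by Predicate~\ref{predicate: disjoint}: the children of $v$ are precisely the inclusion-maximal descendant intervals, so they can be enumerated one at a time --- the first child is the descendant with smallest $in$, and the child following a child $(in_c,out_c)$ is the descendant with smallest $in$ exceeding $out_c$ --- each obtained by one pass. Walking through Predicates~\ref{predicate: nonNull}--\ref{predicate: childAnc} and~\ref{predicate: sanity}--\ref{predicate: inOut}, I would check that each reduces to bounded-memory comparisons against these running aggregates: Predicates~\ref{predicate: nonNull}, \ref{predicate: nonEmpty}, \ref{predicate: sameTrees}, \ref{predicate: childParent}, \ref{predicate: childAnc} only need the narrowest-ancestor scan, Predicate~\ref{predicate: disjoint} is a single scan, and Predicates~\ref{predicate: leaf}, \ref{predicate: firstLastChild}, \ref{predicate: childrenOrder}, \ref{predicate: childDesc}, \ref{predicate: inOut} are verified during the child-enumeration passes by comparing the current neighbor with the last emitted child and with the running maximum of child port numbers.

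I expect the main obstacle to be exactly this last step: showing, predicate by predicate, that the intermediate data of Section~\ref{sec: data} can be replaced by $O(1)$ scalar accumulators without ever storing a list whose length depends on $\deg(v)$, and reconciling the resulting multi-pass local computation with the claim that the verifier runs in ``one time unit'' --- that is, arguing that repeatedly re-reading registers which are guaranteed constant throughout verification does not increase the round complexity. Everything else (the ranges of the label values, the absence of multi-hop communication) is routine.
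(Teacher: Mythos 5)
Your proof is correct, and its core coincides with the paper's: the one-time-unit bound comes from the fact that every macro and predicate depends only on $v$'s own state and the labels of its immediate neighbors, which the ideal time model lets $v$ read in a single time unit, and the $O(\log n)$ space bound comes from the labels taking values in $\{1,\dots,2n\}$. The paper's proof stops there (three sentences), relying on the earlier remark that the intermediate data of Section~\ref{sec: data} ``need not be stored on the node'' without justifying that it can be \emph{recomputed} in $O(\log n)$ working memory. Your proposal supplies exactly that missing justification: the streaming argument showing that $parent_l(v)$ is a running narrowest-ancestor aggregate, that the children can be enumerated one at a time as inclusion-maximal descendant intervals (next child $=$ descendant with smallest $in$ exceeding the previous child's $out$), and that each predicate reduces to comparisons against $O(1)$ scalar accumulators. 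This is a genuine strengthening, since sets such as $desc_l(v)$ and the sorted lists $childrenD_l(v)$, $childrenP_l(v)$ can have size $\Theta(\deg(v))$ and materializing them would break the $O(\log n)$ bound. Two small caveats: your phrase ``a small number of streaming passes'' undersells the child enumeration, which may take $\Theta(\deg(v))$ passes (one per child); this is harmless for the space bound and, under the ideal time model the paper adopts, harmless for the time bound as well, but you should state it as such rather than as ``small.'' Also, your worry about reconciling multiple register re-reads with the one-time-unit claim is resolved by the model itself (local computation is not charged and the registers are stable during verification), so it need not be flagged as an obstacle.
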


\begin{proof}
The running time of $\mathcal{V}$ follows from the fact that each node needs to look only at the labels of its immediate neighbors in order to compute its predicates. Every node shares its $(in, out)$ labels with its neighbors. The maximum value of a label is $2n$ which can be encoded using $O(\log n)$ bits. 
\end{proof}

The following theorem establishes the correctness and performance of the marker $\mathcal{M}$:
\begin{theorem}
There exists a marker that constructs the \emph{first} DFS tree and assigns $(in, out)$ labels to all the nodes of the input graph $G(V,E)$ in time $O(n)$ using $O(\log n)$ bits of memory per node.  
\end{theorem}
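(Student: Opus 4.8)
The plan is to build the marker by \emph{augmenting} an existing distributed DFS construction algorithm, as anticipated in Section~\ref{Sec:marker}. Concretely, take Awerbuch's distributed DFS algorithm~\cite{DFSConstruction1} and translate it to the shared-register model; as noted in Section~\ref{Sec:marker} this translation is immediate and brings the per-node working memory down to $O(\log\Delta)$ (a node reads a neighbour's ``visited'' flag from the neighbour's register rather than storing it), while preserving Awerbuch's $O(n)$ running time and its guarantee of producing \emph{a} DFS tree. The work then splits into three parts: (i) forcing the traversal to be the \emph{first} DFS traversal, (ii) equipping the token with an $(in,out)$ counter, and (iii) checking that neither change hurts the $O(n)$ time or the $O(\log n)$ space bound.

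For (i): change the rule by which the holder $v$ of the DFS token extends the search so that, among the incident edges of $v$ not yet traversed, $v$ always forwards the token along the one with the smallest port index in $\alpha_v$. Because there is a single token, a neighbour $w$ of $v$ is ``already visited'' at the moment $v$ considers the edge $(v,w)$ exactly when $w$ is an ancestor of $v$ or $w$ was discovered through an earlier child of $v$; in the shared-register model $v$ reads this correctly from $w$'s register. Hence the set of children $v$ acquires, and the order in which it acquires them, is precisely the one prescribed by the \emph{first} DFS traversal of Section~\ref{Sec: LexicOrdering}. A straightforward induction on discovery order then shows that the tree produced is the \emph{first} DFS tree --- the root-to-$v$ path in it is the lexicographically smallest (w.r.t.\ $\prec$) simple path from the root to $v$. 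Equivalently, one may verify afterwards that the constructed tree and labels satisfy all the predicates of Section~\ref{Sec: predicates} and invoke Theorem~\ref{thrm: verifierC}.

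For (ii): let the token carry an integer counter $c$, initialised to $1$ when the root first holds it. Whenever the token \emph{enters} a node $v$ for the first time, $v$ sets $in_v:=c$, sets $parent_v$ to the edge along which the token arrived (and $parent_r:=null$ for the root), and increments $c$; whenever $v$ has exhausted all its incident edges and the token is about to backtrack, $v$ sets $out_v:=c$ and increments $c$. These are exactly the discovery/finish timestamps of a sequential DFS driven by a single global clock, so the resulting intervals satisfy all the interval properties used in Section~\ref{Sec:Analysis} --- in particular $out_v=in_v+1$ at a leaf, the $in$ of the first child equals $in_v+1$, $out_v$ equals the $out$ of the last child plus $1$, and $in_w=out_u+1$ for consecutive children --- i.e.\ Predicates~\ref{predicate: sanity}, \ref{predicate: leaf}, \ref{predicate: firstLastChild} and~\ref{predicate: inOut} of Section~\ref{Sec: predicates-all} hold, and likewise Predicate~\ref{predicate: childParent} of Section~\ref{Sec: predicates-nonroot} and the ordering Predicates~\ref{predicate: childrenOrder}, \ref{predicate: childDesc} hold by construction of the traversal order. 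Each of the $n$ nodes causes exactly one increment on entry and one on backtrack, so $c$ never exceeds $2n$ and fits in $O(\log n)$ bits, both inside the token and in the stored labels $in_v, out_v$.

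For (iii): per node the algorithm stores $in_v$, $out_v$ and the parent port, i.e.\ $O(\log n)+O(\log\Delta)=O(\log n)$ bits, and the transient bookkeeping of Awerbuch's procedure is $O(\log\Delta)$ in the shared-register version; updating $c$ costs $O(1)$ per token move, so the total time is still that of Awerbuch's traversal, $O(n)$. The step I expect to require the most care is reconciling the ``smallest-port-first'' rule with the mechanism Awerbuch uses to obtain $O(n)$ rather than $O(|E|)$ time --- the phase in which a freshly visited node announces itself to all neighbours before the token moves on, so that the token is never forwarded to an already-visited node: one has to check that imposing the port order does not reintroduce wasted token moves and hence an $\Omega(|E|)$ term. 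In the shared-register model this is essentially free, since a node reads each neighbour's up-to-date ``visited'' bit directly and therefore only ever forwards the token to a genuinely unvisited neighbour; but this is the one place where the model assumptions of Section~\ref{Sec: Definitions} are genuinely used.
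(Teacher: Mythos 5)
Your proposal is correct and follows essentially the same route as the paper's own proof: augment a standard distributed DFS construction (Awerbuch's, translated to shared registers) with a counter that stamps $in$ and $out$ on discovery and backtrack, force the smallest-port-first rule to obtain the \emph{first} DFS tree, and observe that the labels add only $O(\log n)$ bits and do not disturb the $O(n)$ running time. Your version is in fact more detailed than the paper's sketch, in particular on the interaction between the port-ordering rule and Awerbuch's mechanism for avoiding $\Omega(|E|)$ token moves, which the paper leaves implicit.
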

 
 \begin{proof}
 As described in Section~\ref{Sec:marker}, it is easy to design a marker that adds new actions\footnote{Actually, these are just common actions of various versions of non-distributed DFS.} to a standard DFS tree construction algorithm for computing the $in$ and $out$ labels. The standard DFS tree construction algorithm in shared memory model, without any actions for computing the $(in, out)$ labels has a space complexity of $O(\log \Delta)$ bits per node. The variables for updating the $(in, out)$ labels require $O(\log n)$ bits per node. Therefore the overall space complexity of such a marker is $O(logn)$. \\
 The actions for computing the labels do not change the values of any of the variables of the original algorithm. Also, these actions do not change the algorithm's flow of control. The addition of these actions cannot violate the correctness of the construction algorithm, nor change its time complexity of $O(n)$. \\
 It is easy to modify the algorithm such that a node $v$ always picks the unvisited neighbor with the smallest port number. This ensures that the output of the algorithm is the \emph{first} DFS tree of the input graph. 
 \end{proof}
 
 
\section{Self-stabilizing DFS token circulation}\label{Sec:circulation}
The silent-stabilizing DFS tree of Section~\ref{Sec:SilentStab} can be combined with a self-stabilizing mutual exclusion algorithm for tree networks to obtain a self-stabilizing token circulation scheme for general networks with a specified root. Self-stabilizing mutual exclusion algorithms that circulate a token in the DFS order on a tree network can be found in~\cite{Dolev:1993,Petit97highlyspace-efficient,Petit97SelfStabTree}. 
Petit and Villain presented a space optimal \emph{snap-stabilizing} DFS token circulation algorithm for tree networks in~\cite{Petit2007:SnapStabOnTree} with a \emph{waiting time}(See~\cite{Petit2007:SnapStabOnTree} for a definition of waiting time) of $O(n)$. We can combine our silent-stabilizing DFS tree with the snap stabilizing DFS token circulation protocol of~\cite{Petit2007:SnapStabOnTree} using the fair composition method~\cite{DolevIsraeliMoran} to obtain a DFS token circulation for general networks. 
The space complexity of~\cite{Petit2007:SnapStabOnTree} is $O(\log \Delta)$ and that of our silent-stabilizing DFS tree is $O(\log n)$. Therefore the space complexity of the resulting self-stabilizing DFS token circulation algorithm is $O(\log n)$. \\ 

\subsubsection*{Acknowledgements}
This research was supported in part by a grant from ISF and Technion TASP center.


\begin{thebibliography}{10}

\bibitem{Afek_thelocal}
Yehuda Afek, Shay Kutten, and Moti Yung.
\newblock The local detection paradigm and its applications to
  self-stabilization.

\bibitem{DFSConstruction1}
Baruch Awerbuch.
\newblock A new distributed depth-first-search algorithm.
\newblock {\em Information Processing Letters}, 20(3):147--150, 1985.

\bibitem{AwerbuchTransformer}
Baruch Awerbuch and George Varghese.
\newblock Distributed program checking: A paradigm for building
  self-stabilizing distributed protocols (extended abstract).
\newblock In {\em Proceedings of the 32Nd Annual Symposium on Foundations of
  Computer Science}, SFCS '91, pages 258--267, Washington, DC, USA, 1991. IEEE
  Computer Society.

\bibitem{chlamtac1987tree}
Imrich Chlamtac and Shay Kutten.
\newblock Tree-based broadcasting in multihop radio networks.
\newblock {\em Computers, IEEE Transactions on}, 100(10):1209--1223, 1987.

\bibitem{DFSConstruction2}
Isreal Cidon.
\newblock Yet another distributed depth-first-search algorithm.
\newblock {\em Inf. Process. Lett.}, 26(6):301--305, January 1988.

\bibitem{SelfStabDFS}
Zeev Collin and Shlomi Dolev.
\newblock Self-stabilizing depth-first search.
\newblock {\em Information Processing Letters}, 49(6):297 -- 301, 1994.

\bibitem{SnapStabCournier}
Alain Cournier, Stephane Devismes, Franck Petit, and Vincent Villain.
\newblock Snap-stabilizing depth-first search on arbitrary networks.
\newblock {\em Comput. J}, pages 268--280, 2006.

\bibitem{Cournier05:snap-stabilizing}
Alain Cournier, Stephane Devismes, and Vincent Villain.
\newblock A snap-stabilizing dfs with a lower space requirement.
\newblock In {\em In Seventh International Symposium on Self-Stabilizing
  Systems (SSS05}, pages 33--47, 2005.

\bibitem{Datta:DFTC}
Ajoy~K. Datta, Colette Johnen, Franck Petit, and Vincent Villain.
\newblock Self-stabilizing depth-first token circulation in arbitrary rooted
  networks.
\newblock {\em Distrib. Comput.}, 13(4):207--218, November 2000.

\bibitem{Dijkstra:1974}
Edsger~W. Dijkstra.
\newblock Self-stabilizing systems in spite of distributed control.
\newblock {\em Commun. ACM}, 17(11):643--644, November 1974.

\bibitem{Dolev}
Shlomi Dolev.
\newblock {\em Self-stabilization}.
\newblock MIT Press, 2000.

\bibitem{Dolev:lowerbound}
Shlomi Dolev, Mohamed~G. Gouda, and Marco Schneider.
\newblock Memory requirements for silent stabilization.
\newblock {\em Acta Informatica}, 36(6):447--462, 1999.

\bibitem{DolevIsraeliMoran}
Shlomi Dolev, Amos Israeli, and Shlomo Moran.
\newblock Self-stabilization of dynamic systems assuming only read/write
  atomicity.
\newblock {\em Distrib. Comput.}, 7(1):3--16, November 1993.

\bibitem{Dolev:1993}
Shlomi Dolev, Amos Israeli, and Shlomo Moran.
\newblock Self-stabilization of dynamic systems assuming only read/write
  atomicity.
\newblock {\em Distrib. Comput.}, 7(1):3--16, November 1993.

\bibitem{Even}
Shimon Even.
\newblock {\em Graph Algorithms}.
\newblock W. H. Freeman \& Co., New York, NY, USA, 1979.

\bibitem{SuomelaGoos}
Mika G{\"{o}}{\"{o}}s and Jukka Suomela.
\newblock Locally checkable proofs.
\newblock In {\em Proceedings of the 30th Annual {ACM} Symposium on Principles
  of Distributed Computing, {PODC} 2011, San Jose, CA, USA, June 6-8, 2011},
  pages 159--168, 2011.

\bibitem{HuangChen}
Shing-Tsaan Huang and Nian-Shing Chen.
\newblock Self-stabilizing depth-first token circulation on networks.
\newblock {\em Distributed Computing}, 7(1):61--66, 1993.

\bibitem{johnen97:DFTC}
Colette Johnen, Gianluigi Alari, Joffroy Beauquier, and Ajoy~K Datta.
\newblock Self-stabilizing depth-first token passing on rooted networks.
\newblock In {\em Distributed Algorithms}, pages 260--274. Springer, 1997.

\bibitem{Johnen95:DFTC}
Colette Johnen and Joffroy Beauquier.
\newblock Space-efficient, distributed and self-stabilizing depth-first token
  circulation.
\newblock In {\em In Proceedings of the Second Workshop on Self-Stabilizing
  Systems}, pages 4--1, 1995.

\bibitem{KatzPerry}
Shmuel Katz and Kenneth~J. Perry.
\newblock Self-stabilizing extensions for meassage-passing systems.
\newblock {\em Distributed Computing}, 7(1):17--26, 1993.

\bibitem{MSTVerification}
Amos Korman and Shay Kutten.
\newblock Distributed verification of minimum spanning trees.
\newblock In {\em Proceedings of the twenty-fifth annual ACM symposium on
  Principles of distributed computing}, PODC '06, pages 26--34, New York, NY,
  USA, 2006. ACM.

\bibitem{KormanKuttenSSMST}
Amos Korman, Shay Kutten, and Toshimitsu Masuzawa.
\newblock Fast and compact self stabilizing verification, computation, and
  fault detection of an mst.
\newblock In {\em Proceedings of the 30th Annual ACM SIGACT-SIGOPS Symposium on
  Principles of Distributed Computing}, PODC '11, pages 311--320, New York, NY,
  USA, 2011. ACM.

\bibitem{ProofLabeling}
Amos Korman, Shay Kutten, and David Peleg.
\newblock Proof labeling schemes.
\newblock {\em Distributed Computing}, 22(4):215--233, 2010.

\bibitem{Petit97highlyspace-efficient}
Franck Petit.
\newblock Highly space-efficient self-stabilizing depth-first token circulation
  for trees.
\newblock In {\em Euro-par'97 Parallel Processing, Proceedings LNCS}, pages
  47647--9. Springer-Verlag, 1997.

\bibitem{Petit:DFTC}
Franck Petit.
\newblock Fast self-stabilizing depth-first token circulation.
\newblock In {\em Proceedings of the 5th International Workshop on
  Self-Stabilizing Systems}, WSS '01, pages 200--215, London, UK, UK, 2001.
  Springer-Verlag.

\bibitem{PetitV97:DFTC}
Franck Petit and Vincent Villain.
\newblock Color optimal self-stabilizing depth-first token circulation.
\newblock In {\em ISPAN}, pages 317--323. IEEE Computer Society, 1997.

\bibitem{Petit97SelfStabTree}
Franck Petit and Vincent Villain.
\newblock Optimality and self-stabilization in rooted tree networks.
\newblock {\em Parallel Processing Letters}, 10(01):3--14, 2000.

\bibitem{PetitV00:DFTCM}
Franck Petit and Vincent Villain.
\newblock Self-stabilizing depth-first token circulation in asynchronous
  message-passing systems.
\newblock {\em Computers and Artificial Intelligence}, 19(5), 2000.

\bibitem{Petit2007:SnapStabOnTree}
Franck Petit and Vincent Villain.
\newblock Optimal snap-stabilizing depth-first token circulation in tree
  networks.
\newblock {\em Journal of Parallel and Distributed Computing}, 67(1):1 -- 12,
  2007.

\bibitem{Stomp}
F.A. Stomp.
\newblock Structured design of self-stabilizing programs.
\newblock In {\em Theory and Computing Systems, 1993., Proceedings of the 2nd
  Israel Symposium on the}, pages 167--176, Jun 1993.

\bibitem{Varghese00thefault}
George Varghese and Mahesh Jayaram.
\newblock The fault span of crash failures.
\newblock {\em Journal of the ACM}, 47:47--2, 2000.

\end{thebibliography}


\end{document}